\newtheorem{theo}{Theorem}
\newtheorem{coro}{Corollary}
\newtheorem{lemma}{Lemma}
\newtheorem{fact}{Fact}
\newtheorem{defi}{Definition}
\title{Decidability of Intuitionistic Sentential Logic with Identity via Sequent Calculus\thanks{The results published in this paper were obtained by the authors as part of the project granted by National Science Centre, grant no 2017/26/E/HS1/00127.}}
\author{Agata Tomczyk 
\institute{Adam Mickiewicz University in Pozna{\'n}}
\institute{Faculty of Psychology and Cognitive Science}
\email{agata.tomczyk@amu.edu.pl}
\and
Dorota Leszczyńska-Jasion
\institute{Adam Mickiewicz University in Pozna{\'n}}
\institute{Faculty of Psychology and Cognitive Science}
\email{dorota.leszczynska@amu.edu.pl}}
\begin{document}
\maketitle

\begin{abstract}
The aim of our paper is twofold: firstly we present a sequent calculus for an intuitionistic non-Fregean logic \textsf{ISCI},  which is based on the calculus presented in \cite{isci} and, secondly, we  discuss the problem of decidability of \textsf{ISCI} \textit{via} the obtained system. 
The original  calculus from \cite{isci} did not provide the decidability result for \textsf{ISCI}. There are two  problems to be solved in order to obtain this result: the so-called  loops characteristic for intuitionistic logic and the lack of the subformula property due to the form of the identity-dedicated rules. We  discuss possible routes to overcome these problems: we consider a weaker version of the subformula property, guarded by the complexity of formulas which can be included within it; we also present a proof-search procedure such that when it fails, then there exists a countermodel (in Kripke semantics for \textsf{ISCI}). 
\end{abstract}

\section{Introduction}

The motivation for introducing a number of non-Fregean logics (\textsf{NFL}) is the willingness to formalize ontology of situations found in Wittgenstein's \textit{Tractatus}. Wittgenstein stood in opposition to Frege's denotational theory. Now, instead of Frege's idea that sentences denote either \textit{Truth} or \textit{Falsity}, Wittgenstein underlines that a comparison of two sentences should be based on their \textit{logical form} rather than their \textit{logical value}. It is the logical form which contains the information of \textit{the configuration of the objects in the state of affairs} \cite[p.~15]{witt}. 

Roman Suszko, who developed non-Fregean theories \cite{Suszko:1968a,Suszko:1968b,Suszko1971,suszko-abolition,bloom1972investigations}, followed Wittgenstein's ontology and rejected the so called Fregean Axiom: the idea that the truth values of sentences are sufficient enough to judge their identity. 
 It is worth highlighting that other aspects of Frege's theory were not negated. Suszko would underline that building a logical systems without Frege's Axiom is like \textit{realising Euclid's program without his fifth postulate} \cite{bloom1972investigations}. Ergo, Suszko's formalization of \textit{Tractatus} can be seen as an extension of Frege's theory rather than its alternative. Moreover, Suszko's approach makes the language more expressive and better depicts the colloquial intuition behind the use of the natural language \cite{Omyla2016}. Suszko obtained a series of non-Fregean logics by an addition of a new connective---identity---and axioms characterizing it. In contrary to the classical equivalence,  two sentences are identical whenever they denote the same situation. The weakest \textsf{NFL} introduced by Suszko is \textit{Sentential Calculus with Identity} (\textsf{SCI}). It is built upon the classical propositional calculus by an addition of the identity connective `$\equiv$'. Suszko added four axioms depicting identity's properties: identity is reflexive, it entails equivalence and it is a congruence relation. Moreover, Suszko noted that any theory can be built upon non-Fregean framework. We follow this idea and,  similarly to \cite{isci, calculus19901,Lukowski1990b,Lukowski1993,Lukowski1992,Lewitzka09,lewitzka11}, we study \textsf{SCI} in an intuitionistic setting.

\section{Intuitionistic Sentential Calculus with Identity}

Despite Suszko's claim that other non-classical theories can be modelled within NFL, other extensions of \textsf{SCI} have been relatively rarely analyzed in the literature. \textsf{ISCI}'s name and semantics has been originally introduced in \cite{calculus19901}, and later on appeared in \cite{isci} (\cite{Lukowski1990b,Lukowski1993,Lukowski1992} are basically extensions of \cite{calculus19901}). Considering its intuitionistic setting, the identity connective required an appropriate constructive interpretation. We follow the interpretation of identity proposed by Chlebowski in \cite{isci}, where the author extends the well known Brouwer-Heyting-Kolmogorov interpretation (henceforth: BHK-interpretation) of intuitionistic connectives as follows:

    \begin{center}
\begin{tabular}{c|l}

there is no proof of $\bot$ &  \\
$a$ is a proof of $\phi\supset \psi$ & $a$ is a construction that converts \\ & each proof $a_{1}$ of $\phi$ into a proof $a(a_{1})$ of $\psi$ \\
$a$ is a proof of $\phi\equiv \psi$ & $a$ is a construction which shows that\\
& the classes of proofs of $\phi$ and $\psi$ are equal 
\end{tabular}\\
\end{center}

\noindent 
As we mentioned above, Suszko's identity is stronger than equivalence. As far as the latter is concerned, in accordance with the BHK-interpretation 
$a$ would be a proof of a formula $\phi\leftrightarrow \psi$ provided it is a construction converting each proof of $\phi$ into a proof of $\psi$ and \textit{vice versa}. In light of the above interpretation we can, naturally, wonder what construction would fall under the identity connective. The simplest and most adequate example encapsulating the intuitionistic interpretation of Suszko's identity would be\ldots the simple identity function $\lambda x.x$. It seems that it is not the only possible option, but definitely the identity function as matching proofs of $\phi$ with proofs of $\psi$ can be used to show that the classes of proofs of the two formulas are equal.  
It also strongly suggests that the only case of  formulas which are identical in the above sense should be that of syntactical identity of the formulas. This is in line with the fact that  
\textsf{SCI} is the weakest non-Fregean logic: any formula $\phi$ is identical only to itself.   
Of course, if we were to analyze other axiomatic extensions of \textsc{SCI} and/or \textsc{ISCI}, which would change the properties of the identity connective,  
the interpretation would change as well in order to complement such  properties.

Suszko's identity connective in its classical version is characterized by four axioms, from which the following three are added to an axiomatic basis of intuitionistic sentential logic:
\begin{enumerate}
	\item[($\equiv_{1}$)] $\phi \equiv \phi$
	\item[($\equiv_{2}$)] $(\phi \equiv \delta)\supset (\phi\supset \delta)$
	\item[($\equiv_{4}$)] $((\phi \equiv \delta)\wedge(\chi \equiv \psi))\supset((\phi\otimes \chi) \equiv (\delta\otimes \psi))$
\end{enumerate}
where $\otimes \in \{\supset, \equiv \}$. \textsf{SCI} is characterized axiomatically by adding ($\equiv_{1}$), ($\equiv_2$), ($\equiv_{4}$) together with ($\equiv_{3}$): $(\phi \equiv \delta)\supset ((\lnot\phi) \equiv (\lnot\delta))$ to an axiomatic basis of classical sentential logic. \textit{Modus ponens} is the only rule of inference in both cases: \textsf{SCI} and \textsf{ISCI}.

What can be noticed is the fact that the third axiom scheme expressed in intuitionistic language without negation: $(\phi \equiv \delta)\supset ((\phi\supset\bot) \equiv (\delta \supset\bot))$ is redundant, as it can be obtained on the basis of the axiom scheme $(\equiv_{4})$ and $\bot \equiv \bot$ as an instance of $(\equiv_{1})$.  Naturally, due to the different interpretations of logical connectives, the substance of the said axioms in \textsf{SCI} and \textsf{ISCI} will differ, too. We omit the discussion of the intuitionistic interpretations of axioms ($\equiv_{1}$)--($\equiv_{4}$) since this can be found in \cite{isci}.

\subsection{Language} We now turn to the fragment of \textsf{ISCI} expressed in the language containing only $\bot, \supset, \equiv$. The intuitionistic negation $\neg \phi$ is omitted due to its definitional equivalence to $\phi \supset \bot$. The other connectives are not definable by $\supset, \bot$ in intuitionistic logic, but we omit them for simplicity. 

The language will be called $\mathcal{L}_\mathsf{ISCI}$. By \textsf{Prop} we mean a denumerable set of propositional variables. These are denoted by lower-case indexed letters \textbf{$p_1, p_2,p_3,\ldots$}.   
Formulas with the main connective being the identity operator will be referred to as \textit{equations}. Formulas are denoted by lower-case Greek letters, with subscripts, if necessary. The grammar of $\mathcal{L}_\mathsf{ISCI}$ is as follows:
\[
\phi ::= p_i \;|\; \bot \;|\;  \phi \supset \phi \;|\; \phi \equiv \phi
\]
\noindent where $p_i \in \mathsf{Prop}$. \textsf{Form} will be used for the set of formulas of $\mathcal{L}_\mathsf{ISCI}$. Later we will use $\mathsf{Eq}$ for the set of all equations and $\mathsf{Form}_0$ for the sum $\mathsf{Prop} \cup \mathsf{Eq}$. 

\begin{defi}[Complexity of a formula] By {\em complexity of a formula} we mean the following value: 
\begin{itemize}
    \item $c(\phi)=0$, if $\phi \in \mathsf{Prop}$ or $\phi = \bot$;
    \item when $\phi$ is of the form  $  \chi \otimes \psi $, with $\otimes \in \{\supset, \equiv\}$, then $c(\phi)=c(\chi) + c(\psi) +1$.
\end{itemize}
\end{defi}

Sequent calculi for \textsf{SCI} and \textsf{ISCI} presented in \cite{Chlebowski2018,isci} do not have the subformula  property understood in the usual sense. In \cite{isci} this issue is discussed but no solution is presented. Here we analyse a property called \textit{extended subformula property}. 
The idea behind it is that when constructing a derivation of a formula $\phi$ in a logic with non-Fregean identity we can use a formula $\psi$ built from subformulas of $\phi$, though $\psi$ is not itself a subformula of $\phi$. 
To warrant that the set of extended subformulas of $\phi$ is finite, we put a complexity constraint on the elements of the set. Formally:
\begin{defi}[Subformula, extended subformula]
Let $\phi$ be a formula of $\mathcal{L}_\mathsf{ISCI}$. $sub(\phi)$ is the smallest set of formulas closed under the rules:
\begin{enumerate}
\item $\phi \in sub(\phi)$;
\item if $\chi \otimes \psi \in sub(\phi)$ for $\otimes \in \{\supset, \equiv\}$, then $\{\chi, \psi\} \subseteq sub(\phi)$.
\end{enumerate}
Each element of $sub(\phi)$ is called a {\em subformula of $\phi$}. Further, $ex.sub(\phi)$ is the smallest set closed under the rules:
\begin{enumerate}
\item[3.] $sub(\phi) \subseteq ex.sub(\phi)$;
\item[4.] if $\chi \in ex.sub(\phi)$ and $c(\chi \equiv \chi) \leqslant c(\phi)$, then $\chi\equiv\chi \in ex.sub(\phi)$;
\item[5.] if $\chi\equiv\psi \in ex.sub(\phi)$, then $\{\chi\supset\psi,\psi\supset\chi\} \subseteq ex.sub(\phi)$;
\item[6.] if $\chi_1\equiv \psi_1, \chi_2\equiv \psi_2 \in ex.sub(\phi)$ and $c((\chi_1\otimes\chi_2)\equiv(\psi_1\otimes\psi_2)) \leqslant c(\phi)$, then $(\chi_1\otimes\chi_2)\equiv(\psi_1\otimes\psi_2) \in ex.sub(\phi)$.
\end{enumerate}
Each element of $ex.sub(\phi)$ is called {\em an extended subformula of $\phi$}.
\end{defi}

\subsection{Kripke semantics} 

We recall the Kripke semantics for \textsf{ISCI} proposed in \cite{isci}. 

An \emph{$\mathsf{ISCI}$ frame} is simply an ordered pair $\mathbf{F} = \langle W, \leq \rangle$, where $W$ is a non-empty set and $\leq$ is a reflexive and transitive binary relation on $W$. 
If $\mathbf{F}=\langle W, \leq \rangle$ is an $\mathsf{ISCI}$ frame, then by \emph{assignment in} $\mathbf{F}$ we mean a function:
$$v: \mathsf{Form}_0\times W\longrightarrow \{0, 1\}.$$
\begin{defi} \label{isci-assignment}
An assignment is called {\em $\mathsf{ISCI}$-admissible}, provided that for each $w \in W$, and for arbitrary formulas $\phi$, $\chi$, $\psi$, $\delta$: \begin{enumerate}
\item[$(1)$] $v(\psi\equiv \psi,w)=1$, 
\item[$(2)$] if $v(\psi\equiv \phi,w)=1$ and $v(\chi\equiv \delta,w)=1$, then $v((\psi\otimes \chi)\equiv(\phi\otimes \delta),w)=1$.
\end{enumerate}
\end{defi}
Let us note that by (1), $v(\bot\equiv \bot,w)=1$, hence a special case of (2) is: if $v(\psi\equiv \phi, w)=1$, then $v((\psi\supset \bot)\equiv(\phi\supset \bot),w)=1$. Hence we can see that the notion of \textsf{ISCI}-admissible assignment captures axioms $(\equiv_1)$, $(\equiv_3)$ and $(\equiv_4)$. Axiom $(\equiv_2)$ will be incorporated into the notion of forcing.

The definition of forcing presented in \cite{isci} contains a mistake, which is the lack of clause (2) below; here we introduce the corrected version. For simplicity, we also generalize the monotonicity condition to formulas of arbitrary shape. (This is a negligible difference, however.)
\begin{defi}[forcing]\label{forcing}
Let $v$ be an $\mathsf{ISCI}$-admissible assignment in a given frame $\mathbf{F}$. A \emph{forcing relation $\Vdash$ determined by $v$ in} $\mathbf{F}$ is a relation between elements of $W$ and elements of $\mathsf{Form}$ which satisfies, for arbitrary $w\in W$, the following conditions:
\begin{itemize}
\item[$(1)$] $w\Vdash p_{i}$ iff $v(p_{i}, w) = 1$; 
\item[$(2)$] $w\Vdash \phi\equiv \psi$ iff $v(\phi\equiv \psi, w)=1$;
\item[$(3)$] $w\nVdash\bot$;
\item[$(4)$] if $w \Vdash \phi \equiv \psi$, then $w \Vdash \phi \supset \psi$ and $w \Vdash \psi \supset \phi$;
\item[$(5)$] $w\Vdash \psi\supset \phi$ iff for each $w'$ such that $w\leq w'$, if $w'\Vdash \psi$ then $w'\Vdash \phi$;
\item[$(mon)$]  for any formula $\phi$: if $w\Vdash \phi$ and $w\leq w'$, then $w'\Vdash \phi$.
\end{itemize}
\end{defi} 

\begin{defi}\label{ISCI-model}
An \emph{$\mathsf{ISCI}$ model} is a triple $\mathbf{M} = \langle W, \leq,\Vdash\rangle$, where $\mathbf{F} =  \langle W, \leq\rangle$ is an $\mathsf{ISCI}$ frame and $\Vdash$ is a forcing relation determined by some $\mathsf{ISCI}$-admissible assignment in $\mathbf{F}$. 

A formula $\psi$ which is forced by every world of an $\mathsf{ISCI}$ model, that is, such that $w \Vdash \psi$ for each $w\in W$, is called \emph{true in the model}.  
A formula true in every $\mathsf{ISCI}$ model is called $\mathsf{ISCI}$-\emph{valid}.
\end{defi}

In \cite{isci}  it was proved that the axiomatic account of \textsf{ISCI} is both sound and complete with respect to the presented Kripke semantics.

\section{Sequent Calculus}

In this paper we shall use sequents built of sets of formulas instead of multisets. This decision is motivated by the greater simplicity in the completeness proof. In all the remaining conventions pertaining to sequent calculi we follow \cite{negri-structural,basicproof}. 
Hence a  \textit{sequent} here is a structure $\Gamma \Rightarrow \phi$, where $\Gamma$ (the \textit{antecedent} of a sequent) is a set of formulas of $\mathcal{L}_\mathsf{ISCI}$ and $\phi$ (the \textit{succedent} of a sequent) is a single formula of $\mathcal{L}_\mathsf{ISCI}$. The antecedent of a sequent can be empty, contrary to the succedent.  We shall use $S, S^*, S_1, \ldots$ for sequents.

We present a restricted and slightly modified variant of the sequent calculus $\mathbf{G3}_\mathsf{ISCI}$ for \textsf{ISCI} proposed by Chlebowski and Leszczy\'{n}ska-Jasion in \cite{isci}. It must be stressed that when the notion of a sequent is altered (multisets \textit{vs} sets) the rules inherit different meaning as well, which heavily influences the structural rules of the calculus (see our comment below Definition \ref{def6}). As far as the logical side of the calculus is concerned, the rules considered in this paper capture only $\supset,\bot,\equiv$, whereas calculus $\mathbf{G3}_\mathsf{ISCI}$ from \cite{isci} pertains to a richer language. Taking into account only the rules for the three connectives, there are still some major differences between the two calculi: first of all, we assume a generalized form of axioms; second, we strengthen premises of rules $L^2_\equiv$ and $L_\supset$; and finally, we resign from the rule called $L^{3*}_\equiv$ which has the following shape.
$$\infer[L^{3*}_\equiv]{\phi\equiv\chi, \Gamma \Rightarrow \gamma}{(\phi\otimes\phi)\equiv (\chi\otimes\chi), \phi\equiv\chi,\Gamma\Rightarrow \gamma}$$

The calculus presented here will be called $\mathsf{SC}_\mathsf{ISCI}$. In $\mathbf{G3}_\mathsf{ISCI}$ the formula $\phi$ that occurs on both sides of an axiom must be a propositional variable or an equation. In $\mathsf{SC}_\mathsf{ISCI}$ $\phi$ is arbitrary. Rule $L^2_\equiv$ in $\mathbf{G3}_\mathsf{ISCI}$ has a weaker premise, as only one of the implications is considered. In $\mathsf{SC}_\mathsf{ISCI}$ the premise is strengthened (and so the rule is weakened) to simplify the description of the countermodel construction. 
Rule $L_\supset$ does not need, in general, the presence of the principal implication formula in the right premise of the rule. 
A practical motivation for all these modifications is to simplify the reasoning concerning the countermodel construction: this is not about simplifying \textit{proving-in}, but \textit{proving-about}. 

\begin{table}[h] 
\caption{Rules of $\mathsf{SC_{ISCI}}$}
\label{left-rules1}
\centering
\begin{tabular}{cc}
\hline\noalign{\smallskip}
$\phi, \Gamma \Rightarrow \phi$ &  $\bot, \Gamma \Rightarrow \chi$ \\
& \\
$$\infer[L_{\supset}]{\phi\supset \chi,\Gamma\Rightarrow \psi}{\phi\supset \chi,\Gamma\Rightarrow \phi & \chi,\phi\supset \chi,\Gamma\Rightarrow \psi}$$ & $$\infer[R_{\supset}]{\Gamma\Rightarrow \psi\supset \delta}{\psi,\Gamma\Rightarrow \delta}$$ \\
& \\
{$\infer[L_\equiv^1]{\Gamma\Rightarrow \gamma}{\psi\equiv \psi,\Gamma\Rightarrow \gamma}$} &
{$$\infer[L_\equiv^2]{\phi\equiv \chi,\Gamma\Rightarrow \psi} {\phi \equiv \chi, \phi \supset \chi, \chi \supset \phi, \Gamma \Rightarrow \psi}$$} \\
&\\
\multicolumn{2}{c}{$$\infer[L_\equiv^3]{\psi\equiv \delta, \phi\equiv \chi,\Gamma\Rightarrow \gamma}{(\psi\otimes \phi)\equiv(\delta\otimes \chi),\psi\equiv\delta,\phi\equiv\chi,\Gamma\Rightarrow \gamma}$$} \\
\noalign{\smallskip}\hline
\end{tabular}
\end{table}

The rules of $\mathsf{SC_{ISCI}}$ are presented in Table \ref{left-rules1}. As we can see, they are divided into two groups: rules for intuitionistic implication (the second line in the table) and rules for equations. We consider two schemes of axioms. We do not include the axiom scheme of the form $\Gamma \Rightarrow \phi \equiv \phi$ (the classical equivalent of which can be found for example in \cite{michaels}), however it is obtainable in the present calculus---see Fact \ref{fact 1} below.

\begin{defi}[derivation and proof in $\mathsf{SC_{ISCI}}$]\label{def6}
A {\em derivation of a sequent $S$ in} $\mathsf{SC_{ISCI}}$ is a tree labelled with sequents, with $S$ in the root, and regulated by the rules specified in Table \ref{left-rules1}. If all the leaves of a finite derivation of $S$ are labelled with axioms of $\mathsf{SC_{ISCI}}$, then the derivation is a {\em proof of $S$ in $\mathsf{SC_{ISCI}}$}; we then say that $S$ is {\em provable in  $\mathsf{SC_{ISCI}}$}. 
\end{defi}
Strengthening the right premise of $L_\supset$ by requiring the presence of the implication formula has the effect expressed below---in the root-first perspective, nothing ever disappears from the antecedents of sequents.
\begin{fact}\label{internet}
In each derivation of a sequent in $\mathsf{SC_{ISCI}}$ the antecedents of sequents are bottom-up inherited. 
\end{fact}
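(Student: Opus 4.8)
The plan is to make precise what ``bottom-up inherited'' means and then reduce the global claim to a one-step, rule-by-rule verification. I read the statement as follows: whenever a sequent $\Gamma' \Rightarrow \phi'$ occurs above a sequent $\Gamma \Rightarrow \phi$ on some branch of a derivation in $\mathsf{SC_{ISCI}}$, we have $\Gamma \subseteq \Gamma'$. First I would isolate the \emph{local} version of this property, namely: for every rule of $\mathsf{SC_{ISCI}}$, the antecedent of the conclusion is included in the antecedent of each of its premises. Granting the local version, the global statement follows by a routine induction on the number of inference steps separating the two sequents along the branch, using reflexivity of $\subseteq$ for the base case (when the two sequents coincide) and transitivity of $\subseteq$ for the inductive step. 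Since the axioms are leaves and carry no premises, they impose no constraint and need not be treated in the local verification.

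Next I would establish the local claim by direct inspection of the five inference rules in Table~\ref{left-rules1}. For $R_\supset$, $L_\equiv^1$, $L_\equiv^2$ and $L_\equiv^3$ the antecedent of the premise is obtained from that of the conclusion by \emph{adding} formulas---respectively the newly discharged hypothesis $\psi$, the auxiliary equation $\psi\equiv\psi$, the two implications $\phi\supset\chi$ and $\chi\supset\phi$, and the compound equation $(\psi\otimes\phi)\equiv(\delta\otimes\chi)$---so the inclusion is immediate in each case. The only rule that warrants a remark is $L_\supset$: its left premise $\phi\supset\chi,\Gamma\Rightarrow\phi$ has \emph{exactly} the same antecedent as the conclusion, while its right premise $\chi,\phi\supset\chi,\Gamma\Rightarrow\psi$ retains the principal formula $\phi\supset\chi$ and merely adjoins $\chi$. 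Thus in both premises the antecedent of the conclusion is preserved, which is precisely the effect of the strengthening of $L_\supset$ discussed just before the statement.

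The proof as a whole is elementary, so there is no serious obstacle; the single point that must be handled with care is the reading of ``inheritance'' itself. Because sequents here are built from \emph{sets} rather than multisets, the claim must be understood as set inclusion rather than as literal preservation of occurrences, and no structural weakening rule is invoked. Correspondingly, the content of the fact is simply that \emph{no} rule of $\mathsf{SC_{ISCI}}$ deletes a formula from the antecedent when passing from conclusion to premise---a feature that in the standard left implication rule would fail, since there the principal formula $\phi\supset\chi$ may be dropped from the right premise. It is exactly this deletion that the strengthened $L_\supset$ forbids, and once that case is checked the induction closes with nothing further to verify.
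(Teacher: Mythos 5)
Your proposal is correct and is exactly the argument the paper leaves implicit: the paper states this fact without proof, relying on the observation (made just above it) that the strengthened $L_\supset$ keeps the principal implication in the right premise, so that every rule's premises contain the conclusion's antecedent. Your rule-by-rule local verification followed by induction along the branch, together with the remark that inheritance means set inclusion since sequents are built from sets, is precisely the intended reasoning and all five cases are checked correctly.
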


Calculus $\mathbf{G3}_\mathsf{ISCI}$ contains also the structural rules of cut, contraction and weakening. 
Completeness of $\mathbf{G3}_\mathsf{ISCI}$ was established in \cite{isci} indirectly  by an interpretation of $\mathbf{G3}_\mathsf{ISCI}$ in the axiomatic system for \textsf{ISCI}---the rule of cut is used to simulate \textit{modus ponens} and then its admissibility is demonstrated. However, the proof of admissibility of cut requires the use of contraction. Then the rule of contraction is also shown admissible, but the use of rule $L^{3*}_\equiv$, the one omitted here, seems necessary in the proof. (The admissibility of weakening in $\mathbf{G3}_\mathsf{ISCI}$ is not controversial.) 
Although $\mathbf{G3}_\mathsf{ISCI}$ and $\mathsf{SC}_\mathsf{ISCI}$ differ, it seems that the result presented here can be used to prove completeness of $\mathbf{G3}_\mathsf{ISCI}$ without the structural rules and $L^{3*}_\equiv$. However, we shall not elaborate on this issue here. 

The rules for $\supset$  satisfy the \textit{subformula property}. In contrast, the premises of the identity-based rules can contain formulas which are not subformulas of ones appearing in the conclusion of the rule. However,
\begin{fact}
Let $\mathcal{D}$ be a derivation of sequent $\Rightarrow \alpha$, and let $c(\alpha)=n$. If $\mathcal{D}$ satisfies the following criteria:
\begin{itemize}
    \item if rule $L^1_\equiv$ is applied, then $\psi \equiv \psi \in ex.sub(\alpha)$;
    \item if rule $L^3_\equiv$  is applied in $\mathcal{D}$, then $c((\psi\otimes\phi) \equiv (\delta\otimes\chi)) \leqslant n$, 
\end{itemize}
then each formula occurring in $\mathcal{D}$ is an extended subformula of $\alpha$.
\end{fact}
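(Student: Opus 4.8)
The plan is to argue by induction on the structure of the derivation $\mathcal D$, read from the root upwards, maintaining the invariant that for every sequent $\Gamma\Rightarrow\gamma$ occurring in $\mathcal D$ every formula in $\Gamma\cup\{\gamma\}$ belongs to $ex.sub(\alpha)$. The base case is the endsequent $\Rightarrow\alpha$, where the only formula is $\alpha$, and $\alpha\in sub(\alpha)\subseteq ex.sub(\alpha)$ by clause 3 of the definition of $ex.sub$. For the inductive step I would fix a rule application with conclusion $S$ and premise(s) $S'$, assume the invariant for $S$, and check only the formulas that appear in $S'$ but not in $S$ (those introduced while moving upward). By Fact~\ref{internet} the antecedents only grow upward, so the formulas to be checked are precisely the side formulas of the applied rule together with the new succedent, if any.

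The identity rules are matched exactly by the closure clauses and by the two hypotheses. For $L^1_\equiv$ the introduced formula is $\psi\equiv\psi$, which lies in $ex.sub(\alpha)$ directly by the first criterion. For $L^2_\equiv$ the introduced formulas are $\phi\supset\chi$ and $\chi\supset\phi$; since $\phi\equiv\chi\in S$ is in $ex.sub(\alpha)$ by the inductive hypothesis, clause 5 places both implications in $ex.sub(\alpha)$. For $L^3_\equiv$ the introduced formula is $(\psi\otimes\phi)\equiv(\delta\otimes\chi)$; the side equations $\psi\equiv\delta$ and $\phi\equiv\chi$ lie in $S$, hence in $ex.sub(\alpha)$, and the second criterion guarantees $c((\psi\otimes\phi)\equiv(\delta\otimes\chi))\le n$, so clause 6 applies.

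This leaves the two implication rules, which I expect to be the genuine difficulty. Here $R_\supset$ introduces $\psi$ and $\delta$ from a succedent $\psi\supset\delta$, while $L_\supset$ introduces $\phi$ (in the succedent of the left premise) and $\chi$ (in the antecedent of the right premise) from a principal $\phi\supset\chi$; in both cases I must know that the immediate subformulas of an implication lying in $ex.sub(\alpha)$ again lie in $ex.sub(\alpha)$. I would like to isolate this as an auxiliary lemma---closure of $ex.sub(\alpha)$ under immediate subformulas---and this is where the real work sits.

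The obstacle is that this closure is not transparent, and a naive induction over the clauses of $ex.sub$ stalls at clause 6: an implication may enter $ex.sub(\alpha)$ through clause 5 from an equation $\chi\equiv\psi$ that itself entered through clause 6, and for a clause-6 equation the two immediate subformulas $\chi_1\otimes\chi_2$ and $\psi_1\otimes\psi_2$ are produced by no clause at all, so they need not independently belong to $ex.sub(\alpha)$. The way I would circumvent this is to track the provenance of the implication actually decomposed inside $\mathcal D$ rather than appeal to closure abstractly: such an implication is either a genuine subformula of $\alpha$, whence its parts lie in $sub(\alpha)\subseteq ex.sub(\alpha)$, or it was introduced by $L^2_\equiv$ from an equation already present in the antecedent. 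Accordingly I would strengthen the invariant to record, for every equation occurring in $\mathcal D$, that its immediate subformulas and the two associated implications lie in $ex.sub(\alpha)$, and then show this strengthened invariant survives the upward passage through $L^3_\equiv$. Establishing that last point is the delicate step, and it is where the hypotheses on $\mathcal D$ (and, plausibly, the restriction to derivations produced by the intended proof search) must be used in full.
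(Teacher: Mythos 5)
Your induction scheme and your handling of the three identity rules are correct, and you have located exactly the right obstacle --- but the proposal does not close it, and the strengthened invariant you suggest cannot be established, because it is destroyed precisely by $L^3_\equiv$. Concretely, take $\alpha = (p_1\equiv p_2)\supset((p_3\equiv p_4)\supset(p_5\supset p_5))$, so $c(\alpha)=5$. B-apply $R_\supset$ twice to reach $p_1\equiv p_2,\, p_3\equiv p_4 \Rightarrow p_5\supset p_5$, then $L^3_\equiv$ with $\otimes=\supset$, introducing $(p_1\supset p_3)\equiv(p_2\supset p_4)$, which has complexity $3\leqslant 5$, so the second criterion is met (and $L^1_\equiv$ is never used --- indeed this equation even lies in $ex.sub(\alpha)$ by clause 6, so condition (C4) would be met as well). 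Now b-apply $L^2_\equiv$ to it and then $L_\supset$ to $(p_1\supset p_3)\supset(p_2\supset p_4)$: the left premise has $p_1\supset p_3$ in the succedent. But $p_1\supset p_3\notin ex.sub(\alpha)$: the closure clauses add only genuine subformulas of $\alpha$, self-equations $\chi\equiv\chi$, clause-6 equations (both of whose sides are compound), and clause-5 implications $\chi\supset\psi$ with $\chi\equiv\psi\in ex.sub(\alpha)$; since $p_1\equiv p_3$ enters by none of these, neither does $p_1\supset p_3$ (the same holds for $p_2\supset p_4$ in the right premise). So this derivation satisfies both bulleted criteria yet contains a formula outside $ex.sub(\alpha)$: under a literal reading of the definition of $ex.sub$, the Fact itself fails, and your proposed invariant --- that every equation occurring in $\mathcal{D}$ has its immediate subformulas in $ex.sub(\alpha)$ --- is falsified at the very $L^3_\equiv$ step where you hoped to verify it. Appealing to the proof-search restrictions does not help, since the derivation above is (an initial part of) a restricted one.

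For comparison: the paper states this Fact without any proof at all, so there is no official argument to measure your attempt against; your analysis is in fact sharper than the text, which appears not to have noticed the clause-6 leak. The natural repair is to build closure under immediate subformulas into $ex.sub(\phi)$ itself, i.e., to let rule 2 of the definition (if $\chi\otimes\psi$ is in the set, so are $\chi$ and $\psi$) operate on $ex.sub(\phi)$ as well. Finiteness of $ex.sub(\phi)$ survives, since clauses 4 and 6 only add formulas of complexity at most $c(\phi)$ and taking immediate subformulas strictly decreases complexity. With that amended definition your original, \emph{unstrengthened} induction goes through verbatim: every implication occurring in $\mathcal{D}$ is in $ex.sub(\alpha)$ by the invariant, hence so are its immediate subformulas, which disposes of $L_\supset$ and $R_\supset$ with no provenance-tracking needed.
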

Calculus $\mathsf{SC}_\mathsf{ISCI}$ allows for derivations that do not have the extended subformula property, but, as we shall see, the calculus is complete if we restrict ourselves only to derivations with this property. Despite of the congruence-character of $\equiv$, classical \textsf{SCI} admits the finite model property, which was shown already in \cite{bloom1972investigations}. The model built there  is algebraic and is basically constructed from the set of all subformulas of a given formula. Hence it is not  surprising that within \textsf{ISCI} we can expect a similar effect.

Correctness of $\mathbf{G3}_\mathsf{ISCI}$ was analysed in \cite{isci} with respect to the semantics defined with the small mistake in the definition of forcing (which does not influence the basic construction, though). 

\subsection*{Correctness of $\mathsf{SC_{ISCI}}$} 
Let $\mathcal{M} = \langle W, \leq, \Vdash\rangle$ be an arbitrary model. 
A sequent $\Gamma \Rightarrow \phi$ will be called {\em true in $\mathcal{M}$}, provided that: if all formulas in  $\Gamma$ are true in $\mathcal{M}$, then so is $\phi$. 
A sequent is called \textsf{ISCI}-\textit{valid}, or simply  \textit{valid}, if it is true in every model. 

It is pretty clear that the intuitionistic component of $\mathsf{SC_{ISCI}}$ is correct with respect to the semantics of \textsf{ISCI}, hence we do not analyse the correctness of rules $L_\supset$ and $R_\supset$. We only briefly sketch the arguments that the identity rules of $\mathsf{SC_{ISCI}}$ preserve the validity of sequents.
\begin{enumerate}
\item[$L_\equiv^1$] Suppose that $\Gamma \Rightarrow \gamma$ is not true in a model $\mathcal{M}$. Then there is a  world $w$ in $\mathcal{M}$ forcing each formula in $\Gamma$ but not $\gamma$. Since the relation of forcing is determined by some \textsf{ISCI}-admissible assignment $v$, and $v(\psi \equiv \psi)=1$ for each such assignment, also each world of $\mathcal{M}$ forces $\psi\equiv \psi$. Therefore $w\Vdash \psi\equiv \psi$, which shows that  $\psi\equiv \psi, \Gamma \Rightarrow \gamma$ is not true in $\mathcal{M}$.

\item[$L_\equiv^2$] Suppose $\phi \equiv \chi, \Gamma \Rightarrow \gamma$ is not true in some $\mathcal{M}$. Then there is a world $w$ in $\mathcal{M}$ forcing formulas from $\Gamma$ and formula $\phi \equiv \chi$ and at the same time $w \not\Vdash \gamma$. By (4) in the definition of forcing $w \Vdash \phi \supset \chi$ and $w \Vdash \chi\supset \phi$. Hence it follows that $\phi\equiv\chi,\phi \supset \chi, \chi\supset \phi, \Gamma \Rightarrow \gamma$ is not true in $\mathcal{M}$. \end{enumerate}
The case of $L_\equiv^3$ is proved similarly, with reference to property (2) of an \textsf{ISCI}-admissible assignment. Correctness of the rules of $\mathsf{SC_{ISCI}}$, together with the fact that the axioms of $\mathsf{SC_{ISCI}}$ are valid, yields
\begin{theo}
If there is a proof of $\Rightarrow \phi$ in $\mathsf{SC_{ISCI}}$, then $\phi$ is $\mathsf{ISCI}$-valid.
\end{theo}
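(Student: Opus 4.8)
The plan is to prove soundness by a routine structural induction on the height of the proof of $\Rightarrow \phi$ in $\mathsf{SC_{ISCI}}$, showing the stronger statement that \emph{every} sequent occurring in the derivation is $\mathsf{ISCI}$-valid. The soundness theorem then follows immediately, since $\Rightarrow \phi$ being valid means precisely that $\phi$ is forced in every world of every model (the antecedent being empty), i.e. that $\phi$ is $\mathsf{ISCI}$-valid.

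\emph{Base case.} First I would verify that the two axiom schemes are valid. For $\phi, \Gamma \Rightarrow \phi$ this is immediate: any world forcing all of $\Gamma$ together with $\phi$ certainly forces $\phi$. For $\bot, \Gamma \Rightarrow \chi$ validity is vacuous, since by clause $(3)$ of Definition \ref{forcing} no world forces $\bot$, so the antecedent can never be satisfied.

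\emph{Inductive step.} Here I would argue, rule by rule, that each rule of Table \ref{left-rules1} preserves validity from premises to conclusion, which is exactly the content already sketched in the excerpt. The intuitionistic rules $L_\supset$ and $R_\supset$ are handled by the standard Kripke-semantics arguments for intuitionistic logic, using clause $(5)$ of Definition \ref{forcing} and the monotonicity condition $(mon)$; since the excerpt explicitly declines to analyse them, I would simply cite their correctness. For the identity rules $L^1_\equiv$, $L^2_\equiv$ and $L^3_\equiv$, the arguments given above in the correctness discussion do the work: $L^1_\equiv$ uses property $(1)$ of an $\mathsf{ISCI}$-admissible assignment (every world forces $\psi \equiv \psi$), $L^2_\equiv$ uses clause $(4)$ of forcing, and $L^3_\equiv$ uses property $(2)$ of the assignment. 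In each case one shows the contrapositive: a world refuting the conclusion, after adding the forced formulas guaranteed by the relevant semantic clause, also refutes the premise, so validity of the premise forces validity of the conclusion.

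I do not expect any genuine obstacle here; the theorem is the easy direction of the soundness--completeness pair, and all the semantic facts needed have been packaged into Definitions \ref{isci-assignment} and \ref{forcing} precisely so that each rule matches one admissibility or forcing clause. The only point requiring a little care is making sure the induction is stated for arbitrary sequents $\Gamma \Rightarrow \psi$ rather than only for end-sequents of the form $\Rightarrow \phi$, so that the inductive hypothesis applies to the premises of $L_\supset$ and $R_\supset$, whose antecedents and succedents differ from those of the conclusion.
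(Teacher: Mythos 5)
Your proposal is correct and matches the paper's own argument: the paper likewise establishes validity of both axiom schemes and shows, rule by rule (contrapositively, via the matching clauses of Definitions \ref{isci-assignment} and \ref{forcing}), that each rule preserves sequent validity, with the theorem following by induction on the derivation. Your explicit remark about stating the induction for arbitrary sequents $\Gamma \Rightarrow \psi$ is the only detail the paper leaves implicit, and it is a sound one.
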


As far as invertibility of the rules is concerned, it can be easily seen that the invertibility of the identity rules is warranted by the fact that the antecedents of the conclusions are subsets of the antecedents of the premises. Things are not that simple with the rules for implication. More specifically, 
\begin{fact}
Rule $L_\supset$ is not semantically invertible, that is, if the conclusion is a valid sequent, then the right premise is valid as well, but the left premise needs not be valid.
\end{fact}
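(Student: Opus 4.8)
The plan is to establish the claim by exhibiting a concrete counterexample to the invertibility of the left premise, while first verifying the positive half (that validity of the conclusion forces validity of the right premise). Recall that the rule under discussion is
$$\infer[L_{\supset}]{\phi\supset \chi,\Gamma\Rightarrow \psi}{\phi\supset \chi,\Gamma\Rightarrow \phi & \chi,\phi\supset \chi,\Gamma\Rightarrow \psi}$$
so I must argue that $\phi\supset\chi,\Gamma\Rightarrow\psi$ being valid entails $\chi,\phi\supset\chi,\Gamma\Rightarrow\psi$ being valid, but does \emph{not} entail the validity of the left premise $\phi\supset\chi,\Gamma\Rightarrow\phi$.

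For the right premise, I would reason semantically using the definition of a true sequent: suppose the conclusion is valid and take any model $\mathcal{M}$ with a world $w$ forcing every formula of $\{\chi,\phi\supset\chi\}\cup\Gamma$. Then in particular $w\Vdash\chi$, and by clause $(5)$ of the forcing definition $w\Vdash\chi$ together with reflexivity of $\leq$ gives $w\Vdash\phi\supset\chi$ automatically, so $w$ forces all formulas of $\{\phi\supset\chi\}\cup\Gamma$; validity of the conclusion then yields $w\Vdash\psi$. Hence the right premise is valid, which is the easy and routine direction.

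For the failure of invertibility of the left premise, the strategy is to pick an instance where the conclusion is valid for a \emph{trivial} reason---because the antecedent is never simultaneously forced, or because $\psi$ is forced whenever the antecedent is---yet the left premise demands that $\phi$ be derivable from $\phi\supset\chi$ and $\Gamma$, which generally fails. A clean choice is to let $\Gamma=\{\phi\supset\chi\}$ collapse appropriately and set $\psi=\chi$; then the conclusion $\phi\supset\chi\Rightarrow\chi$ is certainly not what I want, so instead I would take $\chi$ to be something already implied, for instance put $\psi$ equal to a formula forced at every world. Concretely, taking $\phi=p_1$, $\chi=\bot$, $\Gamma=\emptyset$, and $\psi=\bot$, the conclusion $p_1\supset\bot\Rightarrow\bot$ is valid iff $\neg p_1$ entails $\bot$, which fails, so I would instead choose $\psi$ to be an identity formula such as $p_1\equiv p_1$: the conclusion $p_1\supset\bot\Rightarrow p_1\equiv p_1$ is valid since $p_1\equiv p_1$ is forced everywhere by admissibility clause $(1)$, whereas the left premise $p_1\supset\bot\Rightarrow p_1$ is clearly not valid (any one-world model assigning $v(p_1,w)=0$ falsifies it).

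The main obstacle I anticipate is purely bookkeeping: ensuring the chosen instance genuinely satisfies the schematic shape of $L_\supset$ (so that $\phi\supset\chi$ really is the principal formula of the displayed rule) and that the countermodel for the left premise is a legitimate $\mathsf{ISCI}$ model---that is, the assignment respects the admissibility conditions of Definition~\ref{isci-assignment} and the forcing relation satisfies all clauses of Definition~\ref{forcing}. Since a single reflexive world with any admissible valuation trivially meets these requirements, I expect no real difficulty there, and the whole argument reduces to the observation that the left premise $\phi\supset\chi,\Gamma\Rightarrow\phi$ asks for \emph{more} than the conclusion provides, namely a proof of the antecedent-component $\phi$ itself, which the semantics does not guarantee.
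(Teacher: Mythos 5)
Your argument is correct, and it is worth noting that the paper itself states this fact \emph{without} proof, offering only the one-line remark that the failure of invertibility is caused by the left premise and the conclusion not sharing a succedent; so your explicit counterexample supplies what the paper leaves implicit rather than paralleling a written proof. Your final instance works: with $\phi=p_1$, $\chi=\bot$, $\Gamma=\emptyset$, $\psi=p_1\equiv p_1$, the conclusion $p_1\supset\bot\Rightarrow p_1\equiv p_1$ is valid because clause $(1)$ of $\mathsf{ISCI}$-admissibility together with clause $(2)$ of the forcing definition makes $p_1\equiv p_1$ true at every world of every model, while the left premise $p_1\supset\bot\Rightarrow p_1$ is refuted by a one-world reflexive model with $v(p_1,w)=0$ and the syntactic-identity assignment ($v(\psi\equiv\chi,w)=1$ iff $\psi=\chi$), which is easily checked to be $\mathsf{ISCI}$-admissible. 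Two small repairs are needed, neither fatal. First, in the right-premise half your step inferring $w\Vdash\phi\supset\chi$ from $w\Vdash\chi$ requires monotonicity $(mon)$, not reflexivity of $\leq$ as you state; but the step is superfluous anyway, since $\phi\supset\chi$ already occurs in the antecedent of the right premise, whose antecedent is a superset of that of the conclusion. Second, the paper defines truth of a sequent \emph{globally} (if every antecedent formula is true at all worlds of $\mathcal{M}$, then so is the succedent), so the right-premise half is immediate from that superset observation; your world-local phrasing proves a slightly stronger local invertibility but should be restated against the paper's definition. Finally, an even quicker counterexample avoiding the identity connective altogether is available: take the conclusion $p_1\supset p_2, p_3\Rightarrow p_3$ (an axiom, hence valid) with invalid left premise $p_1\supset p_2, p_3\Rightarrow p_1$.
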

\noindent The problem is caused by the fact that the left premise and the conclusion of $L_\supset$ do not share the succedent. On the other hand,
\begin{fact}
Rule $R_\supset$ is semantically invertible, that is, if the conclusion of the rule is a valid sequent, then so is the premise.
\end{fact}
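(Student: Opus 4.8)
The plan is to show that rule $R_\supset$ is semantically invertible by proving the contrapositive: assuming the premise $\psi, \Gamma \Rightarrow \delta$ is \emph{not} valid, I will exhibit a model in which the conclusion $\Gamma \Rightarrow \psi \supset \delta$ fails. First I would unpack the hypothesis: if the premise is not valid, then there exists some model $\mathcal{M} = \langle W, \leq, \Vdash\rangle$ and some world $w \in W$ such that $w$ forces every formula in $\{\psi\} \cup \Gamma$ but $w \nVdash \delta$. The very same model and world will serve as a countermodel for the conclusion.

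Next I would argue that in this model the conclusion fails as well. Since $w \Vdash \chi$ for every $\chi \in \Gamma$, all antecedent formulas of the conclusion are forced at $w$; it remains to show $w \nVdash \psi \supset \delta$. By clause $(5)$ of the definition of forcing, $w \Vdash \psi \supset \delta$ would mean that for every $w'$ with $w \leq w'$, if $w' \Vdash \psi$ then $w' \Vdash \delta$. Taking $w' = w$, which is admissible by reflexivity of $\leq$, and using that $w \Vdash \psi$ while $w \nVdash \delta$, this condition is violated. Hence $w \nVdash \psi \supset \delta$, so $w$ witnesses the failure of $\Gamma \Rightarrow \psi \supset \delta$ in $\mathcal{M}$, and the conclusion is not valid.

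I expect no serious obstacle here: the argument is essentially a direct reading of the forcing clause for implication together with reflexivity of the accessibility relation, and it does not even require monotonicity. The only point demanding a little care is the direction of the implication being proved—semantic invertibility asserts that validity of the conclusion transfers to the premise, so working contrapositively (from a refuted premise to a refuted conclusion) is the cleanest route and avoids any appeal to the persistence condition $(mon)$. The reason this works smoothly for $R_\supset$, in contrast to $L_\supset$, is that the premise and conclusion are forced at the \emph{same} world $w$, so a single countermodel serves both sequents without needing to pass to an accessible successor.
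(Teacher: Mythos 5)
Your proof is correct and matches the paper's argument essentially verbatim: both proceed contrapositively from a world $w$ refuting the premise $\psi, \Gamma \Rightarrow \delta$, and both use reflexivity of $\leq$ (taking $w' = w$ in clause $(5)$ of the forcing definition) to conclude $w \nVdash \psi \supset \delta$, so that the same world refutes the conclusion. Your closing observations---that monotonicity is not needed and that the argument succeeds because premise and conclusion are refuted at the same world, unlike for $L_\supset$---are accurate glosses on the same proof.
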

\begin{proof}
Indeed, suppose that the premise, $\psi, \Gamma \Rightarrow \delta$, of rule $R_\supset$ is not a valid sequent. Then in some model $\mathcal{M}$ there is a world $w$ such that $w \Vdash \psi$, each formula from $\Gamma$ is forced by $w$ and $w \not\Vdash \delta$. Since $w \leq w$, $w \not\Vdash \psi \supset \delta$ and this yields immediately that $\Gamma \Rightarrow \psi \supset \delta$ is not valid.
\end{proof}

\section{Completeness, procedure, decidability}

In this section we prove completeness directly with respect to Kripke semantics by a countermodel construction. 
However, detours are to be expected. 
By and large, a noninvertible rule, like $L_\supset$, in the system means that we will construct a derivation differently when looking for a countermodel than when we expect to find a proof. The difference is in the treatment of implications. In the case of the proof-search, invertible rules are prior to $L_\supset$. In the case of the countermodel construction, however, each implication in the antecedent must be sooner or later treated with $L_\supset$ \textit{before} we move on to another possible world, which relates to the application of $R_\supset$. For this reason, the proof-search procedure sketched in the following subsection is not a basis for the construction of a countermodel.

Let us start with a simple fact established by the following derivation:
$$
\infer[L^1_\equiv]{\Gamma \Rightarrow \phi \equiv \phi}{\Gamma, \phi \equiv \phi \Rightarrow \phi \equiv \phi}
$$
\begin{fact}\label{fact 1}
For each formula $\phi$, sequent $\Gamma \Rightarrow \phi \equiv \phi$ is provable in $\mathsf{SC_{ISCI}}$ for arbitrary antecedent $\Gamma$.
\end{fact}

\subsection{Proof search} 

We assume that we start with a sequent of the form $\Rightarrow \phi$ and that $c(\phi)=n$.

We consider derivations constructed bottom-up, starting with the root. Rules are `applied' in this direction to conclusions to obtain premises; we write `b-applied' for `backwards-applied', so we are not forced to language abuse. Our derivations are constructed under the following conditions:
\begin{enumerate}
    \item[(C1)] \underline{axioms}: no rules are b-applied to axioms, 
    \item[(C2)] \underline{repetitions-check and intuitionistic loop-check}: no rules are b-applied to $\Gamma \Rightarrow \psi$, if the premise (at least one of the premises in the case of $L_\supset$) would be $\Gamma \Rightarrow \psi$ or any other sequent already present on the branch under construction, 
    \item[(C3)] \underline{saturation wrt equations}: rules for identity are b-applied first, until all possible identities from $ex.sub(\phi)$ are constructed; only then the rules for implication are b-applied.
    \item[(C4)] \underline{extended subformula}: if a rule for identity is b-applied, then the active equation in the premise of the rule is an element of $ex.sub(\phi)$. 
\end{enumerate}
It also goes without saying that the rules are b-applied to sequents as long as it is possible to do so without violation of one of the mentioned conditions. A derivation constructed in line with (C1)--(C4) will be called \textit{restricted}. 
Clause (C2) warrants that, e.g., if $L_\supset$ is b-applied, then none of the premises is a copy of the conclusion. 
Clause (C4) yields that restricted derivations satisfy the extended subformula property (recall Fact 1). 

Before we move on, let us establish the following: for each formula $\phi$, set $ex.sub(\phi)$ is finite; hence the number of different sequents that may occur in any restricted derivation of $\Rightarrow \phi$ is finite. Since by clause (C2) there are no repetitions of sequents in a restricted derivation of $\Rightarrow \phi$, it follows that each such derivation is finite. But there is more to it. The same clause warrants that
\begin{fact}\label{finite}
The number of all restricted derivations of $\Rightarrow \phi$ is finite.
\end{fact}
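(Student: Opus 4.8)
The plan is to realize each restricted derivation of $\Rightarrow \phi$ as a finite, finitely branching tree whose nodes carry labels drawn from a fixed finite set and whose overall size is bounded by a quantity depending only on $\phi$; since there are only finitely many such trees, the claim follows. The argument rests on three finiteness observations, two of which are already in place.

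First I would fix the finite reservoir of labels. As remarked just before the statement, and guaranteed by the extended subformula property for restricted derivations together with Fact~\ref{internet}, every formula occurring in any restricted derivation of $\Rightarrow \phi$ lies in the finite set $ex.sub(\phi)$; consequently there are only finitely many distinct sequents---say $N$ of them---that can occur in such derivations. A node of a restricted derivation is determined by the sequent labelling it together with the rule b-applied to it; since both the set of available sequents and the set of rules of $\mathsf{SC_{ISCI}}$ are finite, the set of possible node labels is finite.

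Next I would bound the shape of the tree. By clause (C2) no sequent is repeated along any branch of a restricted derivation, so each branch contains at most $N$ nodes and hence the tree has at most $N$ levels. Inspection of Table~\ref{left-rules1} shows that every rule has at most two premises (the maximum being attained by $L_{\supset}$), so the tree is at most binary. A binary tree with at most $N$ levels has at most $2^N - 1$ nodes, so the total size of any restricted derivation of $\Rightarrow \phi$ is bounded by a number depending only on $\phi$.

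Finally I would conclude by a routine counting argument: there are only finitely many finite binary trees with at most $2^N-1$ nodes, and only finitely many ways to label the nodes of each from the finite label set, the root being moreover fixed to $\Rightarrow \phi$; hence the number of restricted derivations of $\Rightarrow \phi$ is finite. The only real content---the finiteness of the set of admissible sequents---is exactly the point that the extended subformula property and the complexity bounds in the definition of $ex.sub(\phi)$ were designed to secure; once that is granted, the uniform depth bound supplied by (C2) and the bounded branching turn the statement into a finite combinatorial fact, and I expect no further obstacle.
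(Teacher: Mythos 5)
Your proposal is correct and takes essentially the same route as the paper: the paper's own (inline) argument likewise rests on the finiteness of $ex.sub(\phi)$, hence of the set of possible sequents, combined with clause (C2) excluding repetitions of sequents along a branch, and merely leaves implicit the final counting of bounded-depth, at-most-binary labelled trees that you spell out. No gap.
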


\underline{Sketch of the proof-search procedure}: starting with the root, the rules are b-applied maintaining the following priorities of rules: (i) identity rules, (ii) $R_\supset$, (iii) exactly one application of $L_\supset$. After (iii) we go back to (i). 
In point (iii), if there is more than one implication formula in the antecedent (which is usually the case), then there is a choice, and, obviously, one can make a wrong one. This is to be expected in a calculus for intuitionistic logic with a noninvertible rule. At this point backtracking must be employed: 
if a proof is not constructed, we must go back to the last choice of implication formula in (iii) and choose another one. An alternative form of  employing backtracking seems to be by switching to a hypersequent format, where b-applications of $L_\supset$ would result in extending a hypersequent with all possible sequents that correspond to the possible choices of implication formula. We postpone this issue, however, for further research. 

\subsection{Countermodel}

Suppose that the proof-procedure fails for $\Rightarrow \phi$. In order to build a contermodel for $\phi$ we will need \textit{a set of derivations}, sometimes a large one. However, taking into account Fact \ref{finite}, we may claim that our approach is constructive.

Worlds of the conuntermodel will be built from the occurrences of sequents and the values of formulas in the worlds will depend on their presence in antecedents/succedents of sequents. As is usually the case in such constructions, in order to obtain the desired effect the worlds need to be saturated and this can be assured only be suitable applications of the rules.

Here is an auxiliary notion. Suppose that a sequent $\Gamma \Rightarrow \gamma$ occurs on a branch of a restricted derivation. Let $\psi \supset \chi$ be an implication formula, not necessarily an element of $\Gamma$. 
If (i) $\chi \in \Gamma$, or 
(ii) $\psi = \gamma$, then we say that the sequent is \textit{saturated with respect to implication} $\psi \supset \chi$.

Now we will give rule $L_\supset$ 
priority before rule $R_\supset$. 
More specifically, until the end of this section we expect that restricted derivations satisfy, in addition, the following
\begin{itemize}
    \item[(C5)] if rule $R_\supset$ is b-applied to a sequent $\Gamma \Rightarrow \chi$ and there is $\gamma \supset \delta \in \Gamma$ (an implication formula in the antecedent), then either the sequent is saturated with respect to this implication, or there is a successor $S$ of this sequent on the given branch such that (i) there is no application of $R_\supset$ between $\Gamma \Rightarrow \chi$ and $S$, and (ii) $S$ is saturated with respect to $\gamma \supset \delta$.
\end{itemize}
By giving rule $L_\supset$ priority before $R_\supset$ we mean that if a sequent does not satisfy (C5), then before rule $R_\supset$ is applied, $L_\supset$ needs to be applied in order to saturate the sequent. 

By assumption, a restricted derivation constructed in line with (C5) is still not a proof of $\Rightarrow \phi$ in $\mathsf{SC_{ISCI}}$, hence it has an open branch. We will be interested in the leftmost open one. 

Let $\mathcal{D}_{\Rightarrow \phi}$ stand for a restricted derivation of sequent $\Rightarrow \phi$ satisfying (C5); by $\mathcal{B}_{\Rightarrow\phi}$ we shall refer to the leftmost open branch of $\mathcal{D}_{\Rightarrow\phi}$. Each such branch determines a structure as follows.  
Let $W_0$ stand for the set of all occurrences of sequents on $\mathcal{B}_{\Rightarrow\phi}$ (below we would say `sequent' instead of `occurrence of a sequent', but this should not lead to a confusion). Accessibility relation will be defined by the applications of $R_\supset$ {and} $L_\supset$, we start with some auxiliary notions.  For all $S,S^*\in W_0$: we say that $S$ and $S^*$ are in relation $r$ iff 
(i) $S=S^*$, or 
(ii) $S$ is an immediate predecessor or immediate successor of $S^*$ on $\mathcal{B}_{\Rightarrow\phi}$, but $S/S^*$, respectively $S^*/S$, is not an instance of $R_\supset$.  Let $\overline{r}$ stand for the transitive closure of $r$. Classes of abstraction of $\overline{r}$ will constitute points of our countermodel. Less formally, a class of abstraction of $\overline{r}$ contains all occurrences of sequents on $\mathcal{B}_{\Rightarrow\phi}$ between two applications of $R_\supset$. 

We take 
$W = \{[S]_{\overline{r}}: S \in W_0\}$. For $w,y \in W$ we set:\vspace{1mm}
\begin{itemize}
\item $w \leq_0 y$ iff for some $S\in w, S^*\in y$, $S^*/S$ is an instance of $R_\supset$ in $\mathcal{B}_{\Rightarrow\phi}$.\vspace{1mm} 
\end{itemize}
Let  $\leq_W$ stand for the reflexive and transitive closure of $\leq_0$. We say  that the structure  $\langle W,\leq_W \rangle$ \textit{is determined by $\mathcal{B}_{\Rightarrow\phi}$ of $\mathcal{D}_{\Rightarrow \phi}$}. 
Since there is no risk of a confusion, later on we will omit the relativisation to relation $\overline{r}$, and we will write $[S]$ instead of $[S]_{\overline{r}}$.

It can be easily seen that the above construction warrants that sequents considered within one class of abstraction of $\overline{r}$ `get saturated' with respect to implication formulas in the antecedent. 
However, implications in the succedents of sequents cause a problem now, as we can have such sequents on the branch and no warranty that $R_\supset$ was applied: each time when rule $L_\supset$ is applied and the leftmost open branch goes through its left premise, the succedent of a sequent is altered. 
On the other hand, we still know that the sequents that occur on an open branch---in particular, those with implication formulas on the right side---are not provable in the calculus. In what follows we shall pick an appropriate open branch of a derivation of each such troublesome sequent. The branch will serve in the construction of additional points of our countermodel; ones that do not force the implications that occur in the succedents of sequents.

Let $\mathbb{S}$ stand for the set of all restricted derivations of $\Rightarrow \phi$ satisfying (C5) and all subderivations of these derivations. By Fact \ref{finite}, $\mathbb{S}$ is finite. 
It follows that for each sequent, $S$, that occurs in a derivation in $\mathbb{S}$, $\mathbb{S}$ contains a restricted derivation, satisfying (C5), of this very sequent. We will denote such a derivation with $\mathcal{D}_S$ (it does not mean that the derivation is unique; if there is a choice---just pick one). By $\mathcal{B}_S$ we shall refer to the leftmost open branch of $\mathcal{D}_S$. What is more, for each sequent $S$ we define the structure $\langle W_S,\leq_S \rangle$ determined by $\mathcal{B}_S$ of $\mathcal{D}_S$, just as above. Instead of $\langle W_S,\leq_S\rangle$ we may use also $\langle W_{\mathcal{B}_S},\leq_{\mathcal{B}_S}\rangle$, or $\langle W_{\mathcal{B}},\leq_{\mathcal{B}}\rangle$ when the reference to a sequent $S$ is not important (the structure depends on the content of the whole branch, anyway).

We are almost in a position to supplement the initial structure $\langle W,\leq_W \rangle$ (determined by $\mathcal{B}_{\Rightarrow\phi}$) with worlds that do not force the troublesome implication formulas occurring in succedents of sequents. Before we continue, however, we need what follows.
\begin{fact}\label{suma}
Suppose that $\Gamma \Rightarrow \gamma$ is not provable in $\mathsf{SC}_\mathsf{ISCI}$ and that $\Gamma \Rightarrow \gamma$ occurs in the leftmost open branch of a derivation in $\mathbb{S}$. Suppose also that there is $\Gamma^* \Rightarrow \gamma^*$ preceding  $\Gamma \Rightarrow \gamma$ on the branch, and that rule $R_\supset$ is not applied between them. Then sequent $\Gamma \cup \Gamma^* \Rightarrow \gamma$ is not provable in $\mathsf{SC}_\mathsf{ISCI}$.
\end{fact}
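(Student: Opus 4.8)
The plan is to prove the contrapositive: assuming that $\Gamma \cup \Gamma^* \Rightarrow \gamma$ is provable, I would derive that $\Gamma \Rightarrow \gamma$ is provable, contradicting the hypothesis. First, by Fact \ref{internet} the antecedents of the sequents lying on the segment of the branch between $\Gamma^* \Rightarrow \gamma^*$ and $\Gamma \Rightarrow \gamma$ are linearly nested, the one closer to the leaves being the larger. If $\Gamma^* \subseteq \Gamma$ the claim is immediate, so I assume $\Gamma \subsetneq \Gamma^*$; then $\Gamma^* \Rightarrow \gamma^*$ is the upper sequent, $\Gamma \cup \Gamma^* = \Gamma^*$, and it suffices to show that $\Gamma^* \Rightarrow \gamma$ is not provable. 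Note that the succedent of the target sequent is $\gamma$ (taken from the lower sequent), and this succedent will be kept fixed throughout the argument, overriding the actual succedents occurring on the branch.

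Next I would enumerate the sequents of the segment from top to bottom as $T_0 = \Gamma^* \Rightarrow \gamma^*, T_1, \dots, T_m = \Gamma \Rightarrow \gamma$, so that each $T_j$ is the conclusion of the rule whose relevant premise on the branch is $T_{j-1}$; writing $\Delta_j$ for the antecedent of $T_j$, we have $\Gamma^* = \Delta_0 \supseteq \dots \supseteq \Delta_m = \Gamma$. I would then prove by induction on $j$ that $\Delta_j \Rightarrow \gamma$ is provable: the base case $j=0$ is the assumption, and the case $j=m$ delivers the contradiction. For the inductive step I case on the rule linking the premise $T_{j-1}$ to the conclusion $T_j$, which by hypothesis is not $R_\supset$. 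If it is one of the identity rules $L_\equiv^1, L_\equiv^2, L_\equiv^3$, then the succedent is untouched and the rule acts only on the antecedent, so the same rule instance applied to $\Delta_{j-1} \Rightarrow \gamma$ (provable by the induction hypothesis) yields $\Delta_j \Rightarrow \gamma$. If it is $L_\supset$ and the branch passes through its left premise, then $\Delta_{j-1} = \Delta_j$, so $\Delta_{j-1} \Rightarrow \gamma$ is literally $\Delta_j \Rightarrow \gamma$ and there is nothing to do.

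The crux is the remaining case: $L_\supset$ applied with principal formula $\psi \supset \chi$, the branch passing through the right premise. Here $\Delta_{j-1} = \{\chi\}\cup\Delta_j$, and with $\Delta_j = \{\psi\supset\chi\}\cup\Delta$ the left premise is $\psi\supset\chi, \Delta \Rightarrow \psi$, whose succedent $\psi$ is fixed by the principal formula and not inherited from the conclusion. Since the relevant branch is the \emph{leftmost} open one and it descends through the right premise, the subderivation rooted at the left premise contains no open branch, hence is a proof of $\psi\supset\chi,\Delta\Rightarrow\psi$. Combining this proof with the induction hypothesis $\Delta_{j-1}\Rightarrow\gamma$, that is $\chi,\psi\supset\chi,\Delta\Rightarrow\gamma$, through one application of $L_\supset$ with principal $\psi\supset\chi$ and succedent $\gamma$, gives a proof of $\psi\supset\chi,\Delta\Rightarrow\gamma = \Delta_j \Rightarrow\gamma$. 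I expect this to be the main obstacle, since it is the only point where both the specific shape of $L_\supset$ (the left premise's succedent being independent of the conclusion's succedent) and the leftmost-open-branch bookkeeping are essential; the other cases are routine. Completing the induction down to $j=m$ produces a proof of $\Gamma \Rightarrow \gamma$, contradicting the assumption, and the Fact follows.
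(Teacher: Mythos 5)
Your proof is correct and takes essentially the same route as the paper's: the paper's own (sketched) argument likewise observes that $\Gamma \cup \Gamma^* = \Gamma^*$, that every rule applied between the two sequents other than $R_\supset$ acts only on the left side of sequents, and that the succedent-changing b-applications of $L_\supset$ (those whose left premise the branch enters) can simply be skipped, explicitly deferring the formal details to the technique of Lemma~\ref{c2}. Your contrapositive induction down the segment---with the closed left subtrees supplied by the leftmost-open-branch assumption in the right-premise case of $L_\supset$---is exactly the formalization the paper leaves implicit.
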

\begin{proof}
By the inspection of rules we know that $\Gamma \cup \Gamma^* = \Gamma^*$. If $\gamma \neq \gamma^*$, then rule $L_\supset$ is applied between the two sequents and the branch goes through the left premise. The point is that while the b-application of rule $L_\supset$ changes the succedent on the branch, it does not affect applicability of the rules to the resulting left premise, as all the rules except for $R_\supset$, which is not applied between the two sequents, are based on the left side of a sequent. In other words, the b-application of $L_\supset$ that changes $\gamma$ to $\gamma^*$ can be `skipped' and the result will be a derivation going through sequent $\Gamma \cup \Gamma^* \Rightarrow \gamma$.

A formal argument would go along the lines of the argument presented in the proof of Lemma \ref{c2}. We skip it.
\end{proof}

Now we go back to the initial structure $\langle W,\leq_W \rangle$ determined by $\mathcal{B}_{\Rightarrow\phi}$. For each sequent of the form 
$\Gamma \Rightarrow \psi \supset \chi$ on $\mathcal{B}_{\Rightarrow\phi}$ we first define a maximum $\Gamma^M$ for the sequent in $W$:
$$\Gamma^M = \bigcup \{ \Gamma_i : \Gamma_i \Rightarrow \delta \in [\Gamma \Rightarrow \psi \supset \chi] \text{ for some } \delta \}$$
which is, literally, the sum of all antecedents of sequents that were obtained on the branch going through $\Gamma \Rightarrow \psi \supset \chi$ between two applications of rule $R_\supset$. Due to Fact \ref{internet}, we could have defined $\Gamma^M$ just as the maximal (wrt inclusion) antecedent in $[\Gamma \Rightarrow \psi \supset \chi]$, or as the least element wrt to the predecessor-successor relation induced by the rules. All accounts lead to the same effect. By Fact \ref{suma}, sequent $\Gamma^M \Rightarrow \psi \supset \chi$ is not provable in $\mathsf{SC}_\mathsf{ISCI}$. It follows that neither is sequent $\Gamma^M,\psi \Rightarrow \chi$ (for if it was, one application of $R_\supset$ would make the previous sentence false). Now we need a restricted derivation of $\Gamma^M,\psi \Rightarrow \chi$ and the structure determined by its leftmost branch to supplement the constructed countermodel. 

There is still one more subtle impediment to this construction, and it is the fact that---as we have established---there is no warranty that $R_\supset$ was b-applied here, which means that there is no warranty that sequent $\Gamma^M,\psi \Rightarrow \chi$ is in $\mathbb{S}$. For this reason, in the final definitions below we refer to $\mathbb{B}$ instead of $\mathbb{S}$. $\mathbb{B}$ is a set of branches constructed as follows. Take $\{\mathcal{B}_{\Rightarrow\phi}\}$ 
and close this singleton set with the following rule: whenever the conclusion $\Gamma \Rightarrow \psi \supset \chi$ of rule $R_\supset$ occurs on a branch in $\mathbb{B}$, add to $\mathbb{B}$ the leftmost branch of a restricted derivation of sequent $\Gamma^M,\psi \Rightarrow \chi$. 
Finally, let 
$$\overline{W} = \bigcup_{\mathcal{B}\in\mathbb{B}} W_\mathcal{B}, \quad \overline{\leq_0} = \bigcup_{\mathcal{B}\in\mathbb{B}} \leq_{W_\mathcal{B}}, \quad \overline{\leq} \text{ is the transitive closure of } \overline{\leq_0} \cup \overline{\leq_1}, \text{ where}$$
$$\overline{\leq_1} = \left\lbrace\langle w,y \rangle: w = [S], \text{ for some } S \text{ of the form } \Gamma \Rightarrow \psi \supset \chi, \text{ and } y = [\Gamma^M, \psi \Rightarrow \chi]\right\rbrace$$
Let $i, n \in \mathds{N}$. We set $\mathsf{Eq}^i = \{\psi \equiv \chi: c(\psi \equiv \chi)=i\}$ and 
$\mathsf{Form}^n_0 = \mathsf{Prop} \cup \bigcup^{n}_{i=1} \mathsf{Eq}^i$. 
We define the assignment
$$v_0: \mathsf{Form}^n_0 \times \overline{W} \longrightarrow \{0,1\}$$
by requiring that for each $w \in \overline{W}$, $v_0(\psi,w)=1$ iff (i) $\psi$ is of the form $\chi \equiv \chi$ or (ii) there is a sequent of the form `$\Gamma,\psi \Rightarrow\delta$' in $w$. 
Next, we extend $v_0$ to a valuation $v: \mathsf{Form}_0 \times \overline{W} \longrightarrow \{0,1\},$ 
as follows: $v(\psi \equiv \chi,w)=1$ iff (i) $\psi=\chi$ or (ii) $v_0(\psi \equiv \chi,w)=1$ or (iii) $\psi$ is of the form $\psi_1\otimes \psi_2$, $\chi$ is of the form $\chi_1\otimes \chi_2$ and $v(\psi_1 \equiv \chi_1,w)=v(\psi_2 \equiv \chi_2,w)=1$. It is easy to verify:
\begin{coro}
Structure $\langle \overline{W}, \overline{\leq} \rangle$ is an $\mathsf{ISCI}$-frame and 
$v$ is an $\mathsf{ISCI}$-admissible assignment on $\langle \overline{W},\leq \rangle$.
\end{coro}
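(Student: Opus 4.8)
The plan is to verify the two assertions separately, handling the frame conditions first and the admissibility conditions second; both reduce to unwinding the definitions, so the work is largely bookkeeping, and the only delicate point concerns the well-definedness of $v$.

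For the frame, I would first observe that $\overline{W}$ is non-empty: the set $\mathbb{B}$ contains $\mathcal{B}_{\Rightarrow\phi}$, whose branch carries at least the root $\Rightarrow\phi$, so $W_{\mathcal{B}_{\Rightarrow\phi}}$ contributes at least one abstraction class to $\overline{W}=\bigcup_{\mathcal{B}\in\mathbb{B}}W_\mathcal{B}$. Transitivity of $\overline{\leq}$ is immediate, since $\overline{\leq}$ was declared to be the transitive closure of $\overline{\leq_0}\cup\overline{\leq_1}$. For reflexivity I would use that each per-branch relation $\leq_{W_\mathcal{B}}$ was defined as the reflexive (and transitive) closure of the corresponding $\leq_0$; hence every $w\in\overline{W}$ lies in some $W_\mathcal{B}$ and satisfies $w\leq_{W_\mathcal{B}}w$, and since $\leq_{W_\mathcal{B}}\subseteq\overline{\leq_0}\subseteq\overline{\leq}$, reflexivity of $\overline{\leq}$ follows. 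This establishes that $\langle\overline{W},\overline{\leq}\rangle$ is an $\mathsf{ISCI}$-frame.

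For admissibility I would check the two clauses of Definition \ref{isci-assignment} pointwise, noting that admissibility is a purely local (per-world) condition, so the accessibility relation plays no role. Clause $(1)$, $v(\psi\equiv\psi,w)=1$, holds by case (i) of the defining equivalence for $v$ (take $\chi=\psi$). For clause $(2)$, suppose $v(\psi\equiv\phi,w)=1$ and $v(\chi\equiv\delta,w)=1$; then $\psi\otimes\chi$ and $\phi\otimes\delta$ are both compound with the same main connective $\otimes$, and their immediate components satisfy exactly the hypotheses of case (iii) of the definition of $v$, whence $v((\psi\otimes\chi)\equiv(\phi\otimes\delta),w)=1$. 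Thus $v$ is $\mathsf{ISCI}$-admissible.

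The one point deserving care---and the step I expect to be the main obstacle---is that $v$ is genuinely well-defined and total on $\mathsf{Form}_0\times\overline{W}$, given that $v_0$ is defined only on equations of complexity at most $n$, that is on $\mathsf{Form}^n_0$. I would read the defining equivalence for $v$ as a recursion on the complexity of the equation $\psi\equiv\chi$: case (i) settles the syntactic-identity instances outright; case (ii) appeals to $v_0$ precisely when $\psi\equiv\chi\in\mathsf{Form}^n_0$, being read as false otherwise; and case (iii) strictly decreases complexity by passing to the component equations $\psi_1\equiv\chi_1$ and $\psi_2\equiv\chi_2$. Since every equation of complexity exceeding $n$ whose two sides are both compound is resolved by case (iii) down to the base cases, the recursion terminates, and because the three cases are combined disjunctively there is no possibility of conflict. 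With well-definedness in hand, the verifications above go through verbatim.
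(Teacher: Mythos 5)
Your proposal is correct and follows essentially the same route as the paper, whose entire proof is the one-liner that clause (1) of the admissibility definition is warranted by case (i) of the definition of $v$ and clause (2) by case (iii). The additional material you supply---non-emptiness of $\overline{W}$, reflexivity and transitivity of $\overline{\leq}$, and the well-definedness of $v$ as a recursion on the complexity of equations (reading case (ii) as false outside $\mathsf{Form}^n_0$)---is sound routine verification that the paper simply leaves implicit.
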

\begin{proof}
Clause (1) of Definition 4 is warranted by (i) and clause (2) by (iii).
\end{proof}
\begin{coro}\label{equations}
Let $\langle \overline{W},\overline{\leq} \rangle$ and $v$ be as defined above. For equations $\psi \equiv \chi$ of complexity up to $n$ and such that $\psi \neq \chi$, if $v(\psi \equiv \chi,w)=1$, then there is a sequent of the form $\psi \equiv \chi, \Gamma^* \Rightarrow \delta$ in $w$.
\end{coro}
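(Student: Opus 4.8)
The plan is to argue by induction on the complexity $c(\psi \equiv \chi)$, uniformly over worlds $w \in \overline{W}$, after unfolding the recursive definition of $v$. Since $\psi \neq \chi$, clause (i) in the definition of $v$ is excluded, so $v(\psi \equiv \chi, w) = 1$ can hold only through clause (ii) or clause (iii). If it holds through (ii), then $v_0(\psi \equiv \chi, w) = 1$; as $\psi \neq \chi$ rules out the reflexive alternative in the definition of $v_0$, there must be a sequent of the form $\Gamma, \psi \equiv \chi \Rightarrow \delta$ in $w$, which is exactly the conclusion. This settles the base case (in particular, when $\psi$ and $\chi$ are atomic, clause (iii) cannot apply) and every instance in which the equation already sits in an antecedent.

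The substantive case is (iii), where $\psi = \psi_1 \otimes \psi_2$, $\chi = \chi_1 \otimes \chi_2$ share a main connective and $v(\psi_1 \equiv \chi_1, w) = v(\psi_2 \equiv \chi_2, w) = 1$; since $\psi \neq \chi$, at least one component equation is non-reflexive. The idea is to place both component equations into the antecedent side of $w$ and then let the saturation conditions rebuild $\psi \equiv \chi$ via $L^3_\equiv$. First I record, using Fact \ref{internet}, that the antecedents of the sequents lying in a single class $w$ are nested, so their union $\Gamma^M$ contains every formula occurring in any antecedent of $w$; it therefore suffices to put both components into $\Gamma^M$. For a non-reflexive component $\psi_i \equiv \chi_i$ the complexity strictly drops, so the inductive hypothesis yields a sequent in $w$ whose antecedent contains it, whence $\psi_i \equiv \chi_i \in \Gamma^M$. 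For a reflexive component $\psi_i = \chi_i$ I instead invoke saturation (C3): provided $\psi_i \equiv \psi_i \in ex.sub(\phi)$, rule $L^1_\equiv$ is b-applied during the equational saturation of $w$ and inserts $\psi_i \equiv \psi_i$, so again it lands in $\Gamma^M$. With both components in $\Gamma^M$ and $\psi \equiv \chi \in ex.sub(\phi)$, clauses (C3) and (C4) force a b-application of $L^3_\equiv$ that introduces $\psi \equiv \chi$ into the antecedent, producing the required sequent in $w$.

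The main obstacle is precisely the extended-subformula bookkeeping that the previous paragraph used tacitly: to fire $L^1_\equiv$ on a reflexive component and $L^3_\equiv$ on the whole equation, condition (C4) demands $\psi_i \equiv \psi_i \in ex.sub(\phi)$ and $\psi \equiv \chi \in ex.sub(\phi)$. I would therefore strengthen the inductive statement by carrying the hypothesis $\psi \equiv \chi \in ex.sub(\phi)$ and then split on how a non-reflexive compound equation can enter $ex.sub(\phi)$: by clause 3 (it is a genuine subformula of $\phi$) or by clause 6 (the congruence clause). In the clause-6 case both component equations already lie in $ex.sub(\phi)$, so the inductive hypothesis and the reflexive argument apply verbatim. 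In the clause-3 case all of $\psi_1,\psi_2,\chi_1,\chi_2$ are subformulas of $\phi$, hence in $ex.sub(\phi)$; the reflexive components then enter $ex.sub(\phi)$ by clause 4 (with the complexity check inherited from $c(\psi \equiv \chi) \leq n$), while for the non-reflexive components I would prove a small auxiliary fact by the same induction, namely that $\sigma,\tau \in sub(\phi)$ together with $v(\sigma \equiv \tau, w)=1$ forces $\sigma \equiv \tau \in ex.sub(\phi)$, so that the inductive hypothesis can be invoked on them. The delicate point throughout is that clause (i) of $v$ makes every reflexive equation true, even reflexive equations of formulas lying outside $ex.sub(\phi)$; the restriction to extended subformulas is exactly what blocks such out-of-range reflexive padding from yielding a true equation that no application of $L^3_\equiv$ could ever place in an antecedent, and it is this interaction that I expect to require the most care.
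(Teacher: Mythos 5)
Your argument has the same skeleton as the paper's own proof of Corollary \ref{equations}: induction on $c(\psi \equiv \chi)$; clause (i) excluded by $\psi \neq \chi$; clause (ii) yielding the required sequent immediately; and, in case (iii), the inductive hypothesis placing the non-reflexive component equations in antecedents, $L^1_\equiv$ handling the reflexive ones, bottom-up inheritance of antecedents (Fact \ref{internet}) merging both components onto a single sequent, and saturation (C3) firing $L^3_\equiv$ before $R_\supset$. Where you genuinely diverge is the $ex.sub(\phi)$ bookkeeping, and you are right to press it: the paper's proof simply asserts that $L^1_\equiv$ and $L^3_\equiv$ are b-applied (the reflexive case is dismissed as ``straightforward''), but condition (C4) licenses these applications only when the active equation belongs to $ex.sub(\phi)$, and the bound $c(\psi \equiv \chi) \leqslant n$ alone does not secure this. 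Indeed, clause (i) of $v$ makes $\tau \equiv \tau$ true at every world even when $\tau$ contains a variable foreign to $\phi$, so for suitable $\phi$ an equation such as $(\psi_1 \supset \tau) \equiv (\chi_1 \supset \tau)$ of complexity $\leqslant n$ is true at $w$ via (iii) although no restricted derivation can ever introduce it into an antecedent; the corollary as literally stated thus really does need your additional hypothesis $\psi \equiv \chi \in ex.sub(\phi)$, and restricting to it is harmless downstream, since the corollary is only ever invoked for equations actually occurring in sequents of restricted derivations, which lie in $ex.sub(\phi)$ by the extended subformula property. Your repair---carrying $ex.sub(\phi)$-membership through the induction, splitting on clause-3 versus clause-6 provenance, and the auxiliary fact for components that are mere subformulas of $\phi$---is sound; two small points to tidy are that the auxiliary fact should also carry the bound $c(\sigma \equiv \tau) \leqslant n$ so that clause 6 applies at the recombination step, and that the clause-4 complexity check for a reflexive component indeed passes, because $c(\psi_i \equiv \psi_i) = 2c(\psi_i) + 1 \leqslant c(\psi \equiv \chi) - 2 \leqslant n$.
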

\begin{proof}
By induction on complexity of $\psi \equiv \chi$. Let $c(\psi \equiv \chi)=1$ (base case) and assume that $v(\psi \equiv \chi,w)=1$. Case (i) is excluded and (iii) cannot hold, hence (ii) holds, which means that there is $\psi \equiv \chi, \Gamma^* \Rightarrow \delta$ in $w$.

For $c(\psi \equiv \chi)=k+1\leqslant n (1 \leqslant k)$, if $v(\psi \equiv \chi,w)=1$, then (i) is excluded, (ii) proves our thesis, hence suppose that (iii) holds; $\psi$ is of the form $\psi_1\otimes \psi_2$, $\chi$ is of the form $\chi_1\otimes \chi_2$. Then IH warrants that if $v(\psi_1 \equiv \chi_1,w)=v(\psi_2 \equiv \chi_2,w)=1$, then there are $\psi_1 \equiv \chi_1, \Gamma_1 \Rightarrow \delta_1 \in w$ and $\psi_2 \equiv \chi_2, \Gamma_2 \Rightarrow \delta_2 \in w$. (Actually, for the case with $\psi_i = \chi_i$ we need to refer also to applications of $L^1_\equiv$, which is straightforward.) Since the two sequents are on the same branch, there also is one with both formulas $\psi_1 \equiv \chi_1$ and $\psi_2 \equiv \chi_2$ in the antecedent. 
Rule $L^3_\equiv$ is b-applied with respect to these formulas before $R_\supset$. Hence there is $\psi \equiv \chi, \Gamma^* \Rightarrow \delta$ in $w$.
\end{proof}

Let $\mathcal{M} = \langle \overline{W},\overline{\leq}, \Vdash \rangle$ be an $\mathsf{ISCI}$-model with the forcing relation $\Vdash$ determined by assignment $v$.
\begin{lemma}\label{c2}
Let $\chi \in \mathsf{Form}_0$. If for some $w \in \overline{W}$ there is $\Gamma \Rightarrow \chi \in w$, then 
$v_0(\chi,[\Gamma \Rightarrow \chi])=0$, and hence also $v(\chi,[\Gamma \Rightarrow \chi])=0$.
\end{lemma}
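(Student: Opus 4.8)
The plan is to unwind the definition of $v_0$ and show that both of its clauses fail for $\chi$ at the point $w = [\Gamma \Rightarrow \chi]$. Recall that $v_0(\chi,w)=1$ holds exactly when either (i) $\chi$ has the shape $\rho \equiv \rho$, or (ii) some sequent lying in $w$ carries $\chi$ in its antecedent. So it suffices to establish (a) that $\chi$ is not a reflexive equation, and (b) that $\chi$ does not occur in the antecedent of any sequent in the class $w$. Both will be read off from a single underlying fact: the sequent $\Gamma \Rightarrow \chi$ sits on a leftmost open branch and is therefore not provable in $\mathsf{SC}_\mathsf{ISCI}$.

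First I would pass from $\Gamma \Rightarrow \chi$ to the fattened sequent $\Gamma^M \Rightarrow \chi$, where $\Gamma^M$ is the union of all antecedents of sequents belonging to $w$. By Fact \ref{internet} the antecedents inside a single $\overline{r}$-class are nested, so $\Gamma^M$ is realized by the single leafmost sequent of the class, and by Fact \ref{suma} the sequent $\Gamma^M \Rightarrow \chi$ is again not provable. Now (a) and (b) follow by contradiction. If $\chi$ were of the form $\rho \equiv \rho$, then $\Gamma^M \Rightarrow \chi$ would be provable by Fact \ref{fact 1}, which is impossible; this gives (a). If $\chi$ occurred in the antecedent of some sequent of $w$, then $\chi \in \Gamma^M$, so $\Gamma^M \Rightarrow \chi$ would be an instance of the axiom scheme $\phi, \Gamma \Rightarrow \phi$ and hence provable, again impossible; this gives (b). Together (a) and (b) yield $v_0(\chi,w)=0$.

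It remains to lift this to $v$. When $\chi \in \mathsf{Prop}$ the extension leaves the value untouched, so $v(\chi,w)=v_0(\chi,w)=0$. When $\chi$ is an equation, suppose for contradiction that $v(\chi,w)=1$. By (a) the two sides of $\chi$ are distinct, and since $\chi$ arises as a succedent in the construction its complexity is at most $n$; note in particular that every equation occurring in $ex.sub(\phi)$ has complexity bounded by $c(\phi)=n$. Corollary \ref{equations} then forces $\chi$ to appear in the antecedent of some sequent of $w$, contradicting (b). Hence $v(\chi,w)=0$ in this case as well, which completes the argument.

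I expect the main obstacle to be the step packaged inside Fact \ref{suma}: the rigorous justification that the antecedents of an unprovable sequent may be merged within an $\overline{r}$-class without destroying unprovability. Concretely, one must argue that the backward applications of $L_\supset$ that alter the succedent along the class can be \emph{skipped}, producing a genuine derivation of $\Gamma^M \Rightarrow \chi$ from the leaves reached after the fattening; this is precisely the formal reasoning that the proof of Fact \ref{suma} defers to this lemma. A secondary point demanding care is the bookkeeping that guarantees $c(\chi) \leqslant n$, so that Corollary \ref{equations} applies to the succedent equation; this rests on the extended subformula property of restricted derivations together with the complexity constraints built into the definition of $ex.sub(\phi)$.
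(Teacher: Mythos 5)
Your proof is correct and takes a genuinely different route from the paper's. The paper proves Lemma~\ref{c2} by explicit derivation surgery: it locates the shortest path, inside the class $w$, between a sequent with $\chi$ in the antecedent and one with $\chi$ in the succedent, observes that within a class the succedent can change only through left premises of $L_\supset$ (no $R_\supset$ intervenes), and then ``skips'' those left-premise detours one by one until it exhibits a proof of $\Gamma \Rightarrow \chi$ ending in the axiom $\chi,\Gamma_1,\psi\supset\chi \Rightarrow \chi$ --- contradicting unprovability; the equation case, where $\chi$ may enter an antecedent via identity rules, is handled by a separately sketched variant. You instead fold everything into the maximal antecedent $\Gamma^M$: by Fact~\ref{internet} it is realized by the topmost sequent of the class, Fact~\ref{suma} transfers unprovability to $\Gamma^M \Rightarrow \chi$, and then both clauses of $v_0$ fail immediately --- clause (i) by Fact~\ref{fact 1}, clause (ii) by the axiom scheme $\phi,\Gamma \Rightarrow \phi$. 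This buys uniformity (it is irrelevant \emph{how} $\chi$ got into an antecedent, so the propositional and equational cases collapse into one) and it makes explicit the lift from $v_0$ to $v$ via Corollary~\ref{equations} together with the complexity bound on equations in $ex.sub(\phi)$, a step the lemma's ``hence also'' needs but the paper's proof leaves implicit. The price, which you correctly flag yourself: the paper's proof of Fact~\ref{suma} is only a sketch whose formal content is explicitly deferred to ``the argument presented in the proof of Lemma~\ref{c2}.'' Your decomposition therefore inverts the paper's dependency, and adopting it obliges one to give Fact~\ref{suma} a self-contained proof (the argument that a left-premise detour of $L_\supset$ can be excised without affecting the applicability of the remaining, antecedent-based rules, since no $R_\supset$ occurs between the two sequents); otherwise the paper as a whole would be circular. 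Two minor repairs: Fact~\ref{suma} is stated for branches of derivations in $\mathbb{S}$, whereas your class $w$ may lie on a branch added to $\mathbb{B}$ whose root sequent $\Gamma^M,\psi \Rightarrow \chi$ need not occur in $\mathbb{S}$ (the paper notes this explicitly), so the fact should be restated for all branches in $\mathbb{B}$ --- its proof does not use membership in $\mathbb{S}$; and both your argument and the paper's rest on the same presupposition, asserted but not proved in the text, that every sequent on a leftmost open branch is unprovable in $\mathsf{SC}_\mathsf{ISCI}$, so you incur no gap there relative to the paper.
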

\noindent This is an important lemma showing that if a propositional variable or an equation occurs in a succedent of a sequent in $w$, then the same formula does not occur in the antecedent of any sequent in $w$. The proof shows that this situation can only happen when the considered sequent is derivable.
\begin{proof}[Proof of Lemma \ref{c2}]
First we consider in detail the case for $\chi \in \mathsf{Prop}$. 

Assume that $w \in \overline{W}$, then $w$ is a set of occurrences of sequents from an open branch $\mathcal{B}$ constructed as described above. Let $S = \Gamma \Rightarrow \chi$. Suppose also that $S$ occurs on $\mathcal{B}$, but nevertheless, $v_0(\chi,[S])=1$. By definition of $v_0$, for $\chi \in \mathsf{Prop}$, $v_0(\chi,w)=1$ iff $\chi$ occurs in the antecedent of some sequent in $w$. Let $S^* = \Gamma_0, \chi \Rightarrow \delta$ stand for such a sequent:
$$\Gamma_0, \chi \Rightarrow \delta \in [\Gamma \Rightarrow \chi].$$ 
If $\chi=\delta$, then the indicated sequent $S^*$ is an axiom (a contradiction), hence $\chi \neq \delta$. Sequents $S$ and $S^*$ are in relation $\overline{r}$, which means that they are linked with the derivability relation, but there is no application of $R_\supset$ between them. If sequent $S$ precedes sequent $S^*$ in $\mathcal{B}$ (in the sense of the predecessor-successor relation that goes top-down):
$$\infer*{S^*  =  \Gamma_0,\chi \Rightarrow \delta}{S =  \Gamma \Rightarrow \chi},$$
then every formula from the antecedent of $S^*$, $\chi$ in particular, occurs in the antecedent of $S$, hence $S$ is an axiom (antecedents are `bottom-up inherited'). It follows that $S^*$ precedes $S$ in $\mathcal{B}$. (below (1) displays the path between the two sequents)
\begin{equation}\label{pathC}
\infer*{S = \Gamma \Rightarrow \chi}{S^* = \Gamma_0, \chi \Rightarrow \delta} 
\end{equation} 
Now we `cut off' the part $\mathcal{P}$ of (\ref{pathC}) which is the shortest path between a sequent with $\chi$ in the antecedent and a sequent with $\chi$ in the succedent. It means that if $S^*$ is followed by a sequent with $\chi$ in the antecedent, then we drop $S^*$ and consider the path leading from its immediate successor to $\Gamma \Rightarrow \chi$. If the successor also has $\chi$ in the antecedent, then we also drop this sequent, and so on, until we arrive at some $\Gamma^*_0, \chi \Rightarrow \delta^*$ such that its immediate successor does not contain $\chi$ in the antecedent. We do the same with the bottom sequent: if $\Gamma \Rightarrow \chi$ is preceded by a sequent with $\chi$ in the succedent, then we drop $\Gamma \Rightarrow \chi$, and so on, until we arrive at a sequent $\Gamma^* \Rightarrow \chi$ such that its immediate predecessor does not have $\chi$ in the succedent. 
$$\infer*{\Gamma^* \Rightarrow \chi}{\Gamma^*_0,\chi \Rightarrow \delta^*}$$
The rest of the argument consists in deriving a contradiction from these assumptions together with the assumption that the branch is open. (with the assumption that the sequents on branch $\mathcal{B}$ are not provable)

By inspection of the rules we can see that as $\chi$ is not present in the successor of $\Gamma^*_0,\chi \Rightarrow \delta^*$, the sequent must be the right premise of $L_\supset$ (recall that there are no applications of $R_\supset$ in this path). 
Similarly, the bottom sequent must result from $L_\supset$, but this time $\mathcal{B}$ goes through the left premise. 
$$
\infer[L_\supset]{\Gamma^*_1, \gamma \supset \theta \Rightarrow \chi}{\infer*{\Gamma^*_1, \gamma \supset \theta \Rightarrow \gamma}{\infer[L_\supset]{\Gamma_1, \psi \supset \chi \Rightarrow \delta^*}{\infer*{\Gamma_1,\psi \supset \chi \Rightarrow \psi}{closed\ subtree} \ \  & \ \ \infer*{\psi\supset\chi,\chi,\Gamma_1 \Rightarrow \delta^*}{\mathcal{B}}}} & \theta, \Gamma^*_1 \Rightarrow \chi}
$$
\noindent where $\Gamma^*_0 = \Gamma_1,\psi\supset \chi$, $\Gamma^* = \Gamma^*_1, \gamma \supset \theta$. The leftmost subtree must be closed, as $\mathcal{B}$ is the leftmost open branch of $\mathcal{D}$. The b-application of $L_\supset$ to the root changes only the succedent of the sequent. Since $R_\supset$ is not applied on the considered path, this change has no effect on applicability (b-applicability) of rules above.  
We consider a modification of  $\mathcal{D}$:

$$
\infer*{\Gamma^*_1, \gamma \supset \theta \Rightarrow \chi}{\infer[L_\supset]{\Gamma_1, \psi \supset \chi \Rightarrow \delta^*}{\infer*{\Gamma_1,\psi \supset \chi \Rightarrow \psi}{closed\ subtree} \ \  & \ \ \infer*{\chi,\Gamma_1,\psi \supset \chi \Rightarrow \delta^*}{\mathcal{B}^*}}}
$$
Going upwards we apply the same argument: if our open branch $\mathcal{B}^*$ goes through the left premise of $L_\supset$, then we reject this application of the rule, we skip its left premise and the whole right subtree and thus leave $\chi$ in the succedent of a sequent, while not violating the applicability of other rules on the branch. But this means, finally, that sequent $\Gamma \Rightarrow \chi$ is derivable, contrary to the assumption (the only applications of $L_\supset$ that are left are such that the left premise is a provable sequent and the considered branch goes through the right premise, ending as follows):
$$
\infer*{\Gamma \Rightarrow \chi}{\infer*{\Gamma^* \Rightarrow \chi}{\infer[L_\supset]{\Gamma_1, \psi \supset \chi \Rightarrow \chi}{\infer*{\Gamma_1,\psi \supset \chi \Rightarrow \psi}{closed\ subtree} \ \  & \ \ \chi,\Gamma_1,\psi \supset \chi \Rightarrow \chi}}}
$$
Hence it follows that $v_0(\chi,[\Gamma \Rightarrow \chi])=0$.

The argument for $\chi = \chi_1 \equiv \chi_2$ is almost exactly the same, we only start with the observation that as the sequents considered are not provable, it must be $\chi_1 \neq \chi_2$. There is an additional case to consider: when $\chi$ shows up in $\Gamma^*_0,\chi \Rightarrow \delta^*$ by a b-application of a rule for identity. As in the above argument, we go up the derivation and eliminate the applications of $L_\supset$ obtaining a sequent with $\chi$ both in the antecedent and succedent.
\end{proof}

\begin{lemma}
If sequent $\Rightarrow \phi$ is not provable in $\mathsf{SC_{ISCI}}$, then $[\Rightarrow \phi] \not\Vdash \phi$, where $\langle \overline{W},\overline{\leq},\Vdash \rangle$ is constructed as described above.
\end{lemma}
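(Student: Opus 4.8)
The plan is to deduce the statement from a stronger, two-sided \emph{truth lemma} established by induction on $c(\chi)$. Concretely, I would prove that for every $w \in \overline{W}$ and every $\chi \in ex.sub(\phi)$:
\begin{itemize}
\item[(a)] if $\chi$ occurs in the antecedent of some sequent in $w$, then $w \Vdash \chi$;
\item[(b)] if $\chi$ occurs in the succedent of some sequent in $w$, then $w \nVdash \chi$.
\end{itemize}
The lemma then follows at once: the root sequent $\Rightarrow \phi$ belongs to $[\Rightarrow \phi]$ and carries $\phi$ in its succedent, so clause (b) with $\chi = \phi$ and $w = [\Rightarrow \phi]$ gives $[\Rightarrow \phi] \nVdash \phi$. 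Because the implication subcases of (a) and (b) appeal to one another through subformulas, the two claims must be run as a single simultaneous induction.

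The base cases and the equation cases are essentially bookkeeping over the definitions of $v_0$ and $v$. For $\chi \in \mathsf{Prop}$, clause (a) holds because $\chi$ in an antecedent forces $v_0(\chi,w)=1$, whence $w \Vdash \chi$, while clause (b) is exactly Lemma \ref{c2}, giving $v(\chi,w)=0$. For $\chi = \bot$, clause (a) is vacuous (a $\bot$ in an antecedent would make the sequent an axiom, contradicting openness of the branch) and clause (b) holds since $w \nVdash \bot$ by definition of forcing. For an equation $\chi = \psi \equiv \delta$, clause (a) follows from case (ii) in the definition of $v_0$ together with clause (2) of forcing, and clause (b) is once more Lemma \ref{c2} (recall it covers all of $\mathsf{Form}_0$). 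Thus the whole inductive content sits in the implication case.

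For $\chi = \psi \supset \delta$ I would argue as follows. For clause (b), the witnessing sequent $\Gamma \Rightarrow \psi \supset \delta$ is exactly of the shape for which the construction of $\mathbb{B}$ supplies a companion world: by $\overline{\leq_1}$ we have $w \,\overline{\leq}\, y$ for $y = [\Gamma^M, \psi \Rightarrow \delta]$, where the sequent $\Gamma^M, \psi \Rightarrow \delta$ is unprovable (by Fact \ref{suma}, after reversing one $R_\supset$ step) and hence lies on a genuine open branch. In $y$ the formula $\psi$ sits in an antecedent and $\delta$ in the succedent, so the induction hypothesis yields $y \Vdash \psi$ and $y \nVdash \delta$; by clause (5) of forcing this gives $w \nVdash \psi \supset \delta$. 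For clause (a), I would first establish a \emph{persistence} property: if a formula occurs in an antecedent of a sequent in $w$ and $w \,\overline{\leq}\, y$, then it still occurs in an antecedent of a sequent in $y$. This follows by tracing the two generators of $\overline{\leq}$ — upward $R_\supset$ steps, where antecedents grow by Fact \ref{internet}, and the $\overline{\leq_1}$ links, where the target antecedent contains $\Gamma^M$ — and composing along the transitive closure. Given persistence, take any $y$ with $w\,\overline{\leq}\, y$ and $y \Vdash \psi$; then $\psi \supset \delta$ occurs in an antecedent of some sequent in $y$, so by the saturation condition (C5) (and the exhaustive application of rules on open branches) $y$ contains a sequent saturated with respect to $\psi \supset \delta$. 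Hence either $\delta$ occurs in an antecedent of $y$, giving $y \Vdash \delta$ by the induction hypothesis, or $\psi$ occurs in a succedent of $y$, giving $y \nVdash \psi$ and contradicting $y \Vdash \psi$. In the only admissible case $y \Vdash \delta$, so $w \Vdash \psi \supset \delta$ by clause (5).

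The main obstacle is clause (a) for implications: it is the one place where all the ingredients must cooperate, namely persistence of antecedent formulas along $\overline{\leq}$ (resting on Fact \ref{internet} and the definition of $\Gamma^M$), the guarantee (C5) that every antecedent implication is eventually decomposed \emph{within its own world}, and the quantification over \emph{all} successors demanded by clause (5) of forcing. Subsidiary care is needed to confirm that the forcing relation determined by $v$ is genuinely monotone on $\langle \overline{W}, \overline{\leq}\rangle$ — i.e.\ that the atomic and equational part of $v$ is upward closed along $\overline{\leq}$ — which is again an instance of persistence, and to check that the witnessing derivations used in clause (b) actually exist, which is precisely what Fact \ref{suma} secures.
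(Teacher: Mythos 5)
Your proposal is correct and follows essentially the same route as the paper's own proof: the same strengthened two-clause thesis (antecedent formulas forced, succedent formulas not forced) proved by simultaneous induction on complexity, with Lemma~\ref{c2} handling variables and equations, the $\overline{\leq_1}$ companion worlds $[\Gamma^M,\psi\Rightarrow\delta]$ (justified via Fact~\ref{suma}) handling succedent implications, and saturation under (C5) together with bottom-up persistence of antecedents (Fact~\ref{internet}) handling antecedent implications. Your explicit persistence lemma merely repackages the paper's subinduction along $\leq_0$-chains, and your flagging of monotonicity of the forcing relation is a point the paper leaves implicit but which resolves exactly as you say.
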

\begin{proof}
We shall prove a stronger thesis from which our lemma follows. The thesis is a conjunction of the two statements: 
\begin{enumerate}
\item for each sequent $S$ s.t. $[S] \in \overline{W}$ and each formula $\psi$ that occurs in the antecedent of $S$: $[S] \Vdash \psi$, and 
\item for each $[\Gamma \Rightarrow \chi] \in \overline{W}$: $[\Gamma \Rightarrow \chi] \not\Vdash \chi$.
\end{enumerate}
We reason by induction on complexity of formulas $\psi$ and $\chi$.

\textbf{Base step}. Suppose that $c(\psi) = 0$, then $\psi\in \mathsf{Prop}$ or $\psi = \bot$. The second is impossible, as the branch is open. For propositional variables in the antecedent: $\psi \in \mathsf{Form}^n_0$, hence by the definition of $v_0$, $v_0(\psi,[\psi,\Gamma \Rightarrow \chi])=1$. Hence also $v(\psi,[\psi,\Gamma \Rightarrow \chi])=1$ and $[\psi,\Gamma \Rightarrow \chi] \Vdash \psi$ by definition of forcing determined by $v$. Suppose that $c(\chi)=0$, $\chi$ occurs in the succedent. If $\chi=\bot$, then, clearly, $[\psi,\Gamma \Rightarrow \chi] \not\Vdash \chi$. If $\chi \in \mathsf{Prop}$, then, by Lemma \ref{c2}, $v(\chi,[\Gamma \Rightarrow \chi])=0$ and hence $[\Gamma \Rightarrow \chi] \not\Vdash \chi$. 

Let us observe at this point that the same holds for equations. For this reason equations will not be considered in the inductive part.

\textbf{Induction hypothesis}: the thesis holds for $\psi$, $\chi$ of complexity up to $k: 0 \leqslant k < n$. We assume that a formula of complexity $k+1$ is of the form $\delta \supset \gamma$. 

Assume that $c(\psi)=k+1$ and    
$\psi$ is of the form $\delta \supset \gamma$. 
Let $w^* \in W$ be such that $[\psi,\Gamma \Rightarrow \chi]\: \overline{\leq}\: w^*$; the aim is to show that $w^* \not\Vdash \delta$ or $w^* \Vdash \gamma$. Assume that $[\psi,\Gamma \Rightarrow \chi]\: \overline{\leq_0}\: w^*$. 
Then for some branch $\mathcal{B}$, $[\psi,\Gamma \Rightarrow \chi]\: {\leq_\mathcal{B}}\: w^*$. Since $\leq_\mathcal{B}$ is the transitive closure of $\leq_0$,  
this part of the proof is by (sub)induction on the length of the chain: 
$$[\psi,\Gamma \Rightarrow \chi] \leq_0 w_1 \leq_0 \ldots \leq_0 w_{m-1} \leq_0 w^* .$$
The argument is essentially the same in the base and inductive case, and it relies on the fact that the implications in antecedents are carried bottom-up. The base case is $m=1$ and there are further two possibilities: (c) and (d) below. For (c) and (d) the reasoning pertains to classes from one set $W_1$ associated with one sequent.
\begin{itemize}
    \item[(c)] $[\psi,\Gamma \Rightarrow \chi] = w^*$.  
    Rule $L_\supset$ was applied to a sequent from  $[\psi,\Gamma \Rightarrow \chi]$ with respect to formula $\delta \supset \gamma$, hence set $[\psi,\Gamma \Rightarrow \chi]$ contains the left premise of $L_\supset$ with $\delta$ in the succedent, or the right premise with $\gamma$ in the antecedent. In both cases the main induction hypothesis applies, hence $w^* \not\Vdash \delta$ or $w^* \Vdash \gamma$, as required.
    \item[(d)] $[\Gamma \Rightarrow \chi] \leq_0 w^*$, that is, there is an application of $R_\supset$ between a sequent from $w^*$ which is the premise of $R_\supset$ and a sequent from $[\Gamma \Rightarrow \chi]$---a conclusion of $R_\supset$. Implication $\delta \supset \gamma$ is carried to the sequent-premise, hence the argument is exactly the same as for (c).
\end{itemize}
Subinduction hypothesis: the argument would be a repetition of the base case, hence we skip this part.

Now suppose that $[\psi,\Gamma \Rightarrow \chi]\: \overline{\leq_1}\: w^*$. Let us recall that the derivation that is the origin for the branch determining a structure $w^*$ is a part of starts with a sequent defined by a maximum $\Gamma^M$. It means that all formulas from $\psi,\Gamma$ are transfered to the antecedents of sequents in $w^*$, hence this case comes to (d). Finally, when the transitive closure is considered, the inductive argument is just as the base one.

We proceed to 2. Assume that $c(\chi)=k+1$ and 
suppose that (b) $\chi = \delta \supset \gamma$. Then we have 
$[\Gamma \Rightarrow \chi]  \:\overline{\leq_1}\: [\Gamma^M, \delta \Rightarrow \gamma]$, $c(\delta),c(\gamma) < k+1$, hence by the inductive hypothesis $[\Gamma^M, \delta \Rightarrow \gamma] \Vdash \delta$ and $[\Gamma^M, \delta \Rightarrow \gamma] \not\Vdash \gamma$, and thus $[\Gamma \Rightarrow \chi] \not\Vdash \delta \supset \gamma$. \end{proof}
It follows that
\begin{theo}
If a sequent $\Rightarrow \phi$ is $\mathsf{ISCI}$-valid, then it is provable in $\mathsf{SC}_\mathsf{ISCI}$.
\end{theo}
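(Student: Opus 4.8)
The plan is to establish the contrapositive: assuming that $\Rightarrow \phi$ is \emph{not} provable in $\mathsf{SC_{ISCI}}$, I will produce an $\mathsf{ISCI}$-model in which $\phi$ is not true, thereby witnessing that $\phi$ is not $\mathsf{ISCI}$-valid. Since validity of the sequent $\Rightarrow \phi$ coincides with $\mathsf{ISCI}$-validity of the formula $\phi$ (the antecedent being empty, the truth condition on the sequent reduces to truth of $\phi$), this immediately yields the theorem.

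First I would run the entire countermodel construction of this section on the unprovable sequent $\Rightarrow \phi$. By Fact~\ref{finite} there are only finitely many restricted derivations of $\Rightarrow \phi$ satisfying (C5), and by assumption none of them is a proof; hence each has an open branch, and I fix the leftmost one, $\mathcal{B}_{\Rightarrow\phi}$. Closing the singleton $\{\mathcal{B}_{\Rightarrow\phi}\}$ under the rule that supplements worlds refuting succedent-implications produces the set $\mathbb{B}$ and, with it, the structure $\langle \overline{W}, \overline{\leq}\rangle$ together with the assignment $v$ defined above.

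Next I would verify that this structure is a legitimate model. By the Corollary asserting that $\langle \overline{W}, \overline{\leq}\rangle$ is an $\mathsf{ISCI}$-frame and that $v$ is $\mathsf{ISCI}$-admissible, the triple $\mathcal{M} = \langle \overline{W}, \overline{\leq}, \Vdash\rangle$, with $\Vdash$ determined by $v$, is an $\mathsf{ISCI}$-model in the sense of Definition~\ref{ISCI-model}. In particular $[\Rightarrow\phi] \in \overline{W}$, since $\mathcal{B}_{\Rightarrow\phi}$ contributes its classes of abstraction to $\overline{W}$.

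Finally I would invoke the preceding Lemma directly: because $\Rightarrow \phi$ is not provable, that Lemma gives $[\Rightarrow \phi] \not\Vdash \phi$. Thus there is a world of the $\mathsf{ISCI}$-model $\mathcal{M}$ that does not force $\phi$, so $\phi$ is not true in $\mathcal{M}$, and therefore $\phi$ is not $\mathsf{ISCI}$-valid. The main obstacle here is not in the theorem itself, which is an almost immediate consequence of the preceding Lemma via contraposition, but in that Lemma and in the results it rests on---in particular Lemma~\ref{c2} and Corollary~\ref{equations}, which guarantee that succedent formulas are genuinely refuted and that the saturation with respect to equations and implications behaves as required. The only point demanding care at this stage is the bookkeeping: confirming that the distinguished world $[\Rightarrow\phi]$ indeed belongs to $\overline{W}$ and that $\mathcal{M}$ meets all clauses of Definition~\ref{ISCI-model}, so that the failure of forcing at a single world legitimately refutes $\mathsf{ISCI}$-validity.
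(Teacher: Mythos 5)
Your proposal is correct and matches the paper exactly: the theorem is stated there with ``It follows that'' immediately after the countermodel lemma, so the intended proof is precisely your contraposition---unprovability of $\Rightarrow \phi$ yields, via the constructed model $\langle \overline{W}, \overline{\leq}, \Vdash\rangle$ and the preceding Lemma, a world $[\Rightarrow\phi]$ with $[\Rightarrow\phi] \not\Vdash \phi$, refuting $\mathsf{ISCI}$-validity. Your additional bookkeeping (checking $[\Rightarrow\phi] \in \overline{W}$ and that the Corollary licenses $\mathcal{M}$ as a genuine $\mathsf{ISCI}$-model) is exactly the right glue and is consistent with the paper's construction.
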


\section{Final remarks}
In the paper we presented a sequent calculus for \textsf{ISCI} and showed that \textsf{ISCI} is decidable. It was shown by arguments relating to the fact that restricted derivations are finite objects that the described procedures can be deemed constructive. We have postponed, however, for the future both the complexity constraints and the implementation issues.

\bibliographystyle{eptcs}
\bibliography{biblio}
\end{document}